\newtheorem*{theorem*}{Theorem}
\newtheorem{theorem}{Theorem}
\newtheorem{lemma}{Lemma}
\newtheorem{proposition}{Proposition}
\theoremstyle{definition}
\theoremstyle{remark}
\newtheorem{remarkx}{Remark}
\renewcommand{\epsilon}{\upepsilon}
\definecolor{red_n}{rgb}{1.0, 0.0, 0.0}
\definecolor{blue_n}{rgb}{0.0, 0.0, 1.0}
\definecolor{orange_n}{RGB}{255.0, 127.0, 0.0}
\definecolor{purple_n}{RGB}{128.0, 0.0, 128.0}
\definecolor{dark_green}{rgb}{0.0, 0.5, 0.0}
\def\@email#1#2{%
 \endgroup
 \patchcmd{\titleblock@produce}
  {\frontmatter@RRAPformat}
  {\frontmatter@RRAPformat{\produce@RRAP{*#1\href{mailto:#2}{#2}}}\frontmatter@RRAPformat}
  {}{}
}%
\begin{document}

\title{Transverse modulation in electrovac Brinkmann pp‑waves: Maxwell consistency and curvature universality}
\author{Galin S. Valchev}
 \altaffiliation[Also at ]{University Hospital Sofiamed, G. M. Dimitrov blvd., 1797 Sofia, Bulgaria}
 \email{gvalchev@imbm.bas.bg}
\affiliation{Institute of Mechanics, Bulgarian Academy of Sciences, Acad. G. Bonchev str., Sofia 1113, Bulgaria}%

\date{\today}

\begin{abstract}
Electrovac pp--waves provide an exact arena in which co--propagating gravitational and electromagnetic radiation share a common Brinkmann profile. While the general Einstein--Maxwell pp--wave family is well understood, it is tempting in applications (e.g.\ lensing or scattering) to ``modulate'' a plane electromagnetic wave by a weak transverse envelope $1+\gamma f(x,y)$. In this work we show that such a prescription is generically incompatible with the source--free Maxwell equations within the aligned null pp--wave ansatz. 

We give a minimal, polarization--agnostic gauge completion of a transversely modulated plane--wave potential and prove a cancellation theorem: for fields that remain null and aligned ($F_{xy}=0$, no $v$--dependence), the Maxwell equations force the physical transverse field $F_{ui}$ to be curl--free and divergence--free in the transverse plane, i.e.\ locally the gradient of a harmonic scalar. Under standard regularity/decay (or zero--mode) conditions that exclude additional harmonic transverse modes, all $\mathcal O(\gamma)$ dependence on an arbitrary smooth profile $f(x,y)$ drops out of $F_{ui}$ and hence of the electrovac source $T_{uu}$. Consequently, to first order in $\gamma$ the electromagnetic contribution to the Brinkmann profile is universal: it consists of the familiar cycle--averaged isotropic $r^2$ term together with a purely oscillatory isotropic correction at frequency $2\omega$ that is present only for non--circular polarization. 

We isolate the residual harmonic freedom corresponding to genuine Maxwell--admissible transverse structure and relate it to the standard holomorphic/harmonic data of electrovac pp--waves. By Kerr--Schild linearity we further superpose an arbitrary co--propagating vacuum gravitational pp--wave and provide a transparent mapping between TT--gauge strain and Brinkmann amplitudes. Our results clarify which transverse ``envelopes'' are physical in electrovac pp--waves and identify the minimal extensions (currents, non--null fields, or more general Kundt/gyraton geometries) required to model localized beams.
\end{abstract}

\maketitle
%---------------------------------------
\section{\label{sec:intro}Introduction}
%---------------------------------------
Plane–fronted waves with parallel rays (pp‑waves) constitute a distinguished class of exact solutions of general relativity that model freely propagating radiation. They are characterized by a covariantly constant null vector \(k^\mu\) (\(\nabla_\nu k^\mu=0\), \(k^\mu k_\mu=0\)), so that the wavefronts (surfaces of constant retarded time \(u\)) are flat and the null rays remain parallel. In Brinkmann coordinates all curvature is encoded in a single transverse profile \(h(u,x,y)\), which makes these spacetimes an analytically transparent arena for isolating genuine tidal effects, superposing aligned fields, and assessing how matter responds to propagating spacetime distortions.\cite{Brinkmann1925,Griffiths2009,Stephani2003}

In Einstein--Maxwell theory, aligned \emph{null} electromagnetic fields co--propagating with a pp--wave source the same Brinkmann profile through the single Ricci component \(R_{uu}\).\cite{Bonnor1969,Hassaine2007,Griffiths2009}
The resulting electrovac pp--waves are classical and well understood: Maxwell’s equations constrain the admissible transverse dependence of the electromagnetic field, and the Einstein equations reduce to a transverse Poisson problem for \(h\) with source proportional to the null energy flux \(T_{uu}\).\cite{Griffiths2009,Stephani2003}
Throughout, by \emph{electrovac} we mean the source-free Einstein--Maxwell system (no charges/currents $J^\mu=0$) with no matter other than the electromagnetic field.
A further simplification is the Kerr--Schild nature of Brinkmann pp--waves, which renders the field equations linear in \(h\) within this class and allows straightforward superposition of an arbitrary vacuum pp--wave sector.\cite{Ehlers1962,Griffiths2001,DebneyKerrSchild1969}

In many physical situations one would like to model radiation that is nearly planar but not perfectly homogeneous across the wavefront: finite aperture, weak focusing/defocusing, gravitational lensing magnification and shear, or scattering through a turbulent medium naturally suggest a slowly varying transverse modulation of the field amplitude. A common heuristic is to start from a plane wave and multiply the potential (or fields) by a weak envelope factor \(1+\gamma f(x,y)\), with \(0<|\gamma|\ll1\) and \(f\) smooth.
A basic and important question is then whether such a prescription can be implemented within the \emph{source--free} Einstein--Maxwell pp--wave family, and—if it can—what new curvature it produces.

The central result of this paper is that, within the aligned \emph{null} pp--wave ansatz, an arbitrary smooth transverse envelope is \emph{not} a freely specifiable electrovac datum. More precisely: for \(v\)-independent, aligned null fields with \(F_{xy}=0\), Maxwell’s equations enforce that the transverse field \(F_{ui}\) is both divergence--free and curl--free in the \((x,y)\) plane, hence locally the gradient of a harmonic scalar. As a consequence, generic localized or non-harmonic modulations \(f(x,y)\) cannot survive as physical \(\mathcal O(\gamma)\) corrections to the Maxwell tensor without introducing additional harmonic modes or leaving electrovac. In a systematic weak--modulation expansion we show explicitly that once Maxwell’s equations are restored by a minimal gauge completion, all \(\mathcal O(\gamma)\) dependence on \(f(x,y)\) cancels from \(F_{ui}\) (and thus from \(T_{uu}\)) under standard regularity/decay (or zero--mode) conditions. The electromagnetic contribution to the Brinkmann profile is therefore \emph{universal} at \(\mathcal O(\gamma)\): it contains the familiar isotropic cycle--averaged \(r^2\) term and, for non--circular polarization, an isotropic oscillatory correction at frequency \(2\omega\). Any genuinely transverse structure compatible with source--free Maxwell theory corresponds to residual harmonic data, in agreement with the standard characterization of electrovac pp--waves.\cite{Griffiths2009,Stephani2003}

Concretely, we:
(i) formulate and prove a Maxwell--consistency theorem for weak transverse ``envelope'' modulations of an aligned plane electromagnetic wave, providing an explicit minimal gauge completion valid for arbitrary smooth \(f(x,y)\) and identifying the residual harmonic freedom;
(ii) establish a cancellation/universality result for the electrovac source \(T_{uu}\) and hence for the electromagnetic contribution to the Brinkmann profile \(h\) at \(\mathcal O(\gamma)\) under natural boundary/regularity assumptions, and give the resulting polarization--resolved DC/AC decomposition;
(iii) compute the corresponding curvature in closed form and document the near--axis tidal content, emphasizing which pieces are universal and which require genuinely harmonic Maxwell data; and
(iv) superpose an arbitrary co--propagating vacuum gravitational pp--wave by Kerr--Schild linearity and provide a transparent map between the TT--gauge strain and Brinkmann amplitudes in appropriately rescaled coordinates.\cite{Ehlers1962,Griffiths2001,Griffiths2009,DebneyKerrSchild1969}

Our analysis is restricted to aligned \emph{null} Einstein--Maxwell pp--waves (electrovac) in Brinkmann form and to weak transverse modulations organized by a bookkeeping parameter \(\gamma\).
The universality result does \emph{not} imply that transverse structure is impossible in electrovac pp--waves; rather, it clarifies that the admissible transverse dependence is harmonic (equivalently, captured by the standard holomorphic/harmonic Maxwell data), and that generic localized envelopes must be supported by physics beyond electrovac pp--waves—e.g.\ external currents, non--null field components, or more general Kundt/gyraton geometries appropriate to finite beams.

Section~\ref{sec:BrinkmanIntro} reviews Brinkmann pp--wave kinematics and the Einstein--Maxwell reduction. Section~\ref{subsec:arbPolEMwaveweakf} formulates the weakly modulated potential ansatz and presents the Maxwell--consistency/gauge--completion theorem, including the harmonic obstruction and residual freedom. Sections~\ref{Tuusource}--\ref{TDC} derive the resulting (universal) electrovac source \(T_{uu}\), its cycle--averaged and oscillatory decomposition, and the corresponding Brinkmann profile to \(\mathcal O(\gamma)\). Section~\ref{Curvecont} analyses the curvature and near--axis tidal content. Section~\ref{GWsuperpos} adds an arbitrary vacuum gravitational pp--wave and connects Brinkmann amplitudes to TT--gauge strain. Appendices collect technical identities and gauge/boundary condition remarks.

%---------------------------------------
\section{The Brinkmann Electrovacuum Metric: Properties and Physical Interpretation}\label{sec:BrinkmanIntro}
%---------------------------------------
In a standard coordinate representation—known as the Brinkmann form—the metric reads:
\begin{equation}\label{Brinkmann_metric}
\mathrm{d}s^{2} = -2\,\mathrm{d}u\,\mathrm{d}v + \mathrm{d}x^{2} + \mathrm{d}y^{2} + h(u,x,y)\,\mathrm{d}u^{2},
\end{equation}
where \(u=(ct-z)/\sqrt{2}\) (retarded time), \(v=(ct+z)/\sqrt{2}\) (an affine parameter along the rays), and \(x,y\) span the transverse plane.\cite{Brinkmann1925,Ehlers1962,Kundt1961,Griffiths2009,Stephani2003} The function \(h(u,x,y)\) encodes the wave profile and polarization. A common vacuum choice is quadratic in \(x,y\), e.g.
\(h(u,x,y)=a(u)\,(x^{2}-y^{2})+2\,b(u)\,x y\),
representing the two independent polarizations.

All nontrivial curvature of Eq. \eqref{Brinkmann_metric} resides in the \(h\)-term. The only independent Riemann components are
\(
R_{uiuj} = -\tfrac{1}{2}\,\partial_i\partial_j h
\)
\((i,j\in\{x,y\})\), and the Ricci tensor reduces to:
\begin{equation}
R_{uu} = -\tfrac{1}{2}\,\Delta_{\perp} h, \qquad
\Delta_{\perp} \equiv \partial_x^2 + \partial_y^2,
\end{equation}
as is standard for pp--wave metrics in Brinkmann form.\cite{Griffiths2009,Stephani2003}

The Brinkmann metric can also be written in Kerr–Schild form:
\begin{equation}\label{KerrSchild}
g_{\mu\nu} = \eta_{\mu\nu} + h(u,x,y)\,k_{\mu}k_{\nu},
\end{equation}
where \(\eta_{\mu\nu}=\mathrm{diag}(-1,+1,+1,+1)\) is the Minkowski metric in Cartesian coordinates \(x_{\rm M}^{\mu}=(ct,x,y,z)\) and \(k_\mu\,\mathrm{d}x_{\rm M}^\mu=\mathrm{d}u\). Here \(k^\mu\) is null with respect to both \(g_{\mu\nu}\) and \(\eta_{\mu\nu}\) and is geodesic (\(k^\nu\nabla_\nu k^\mu=0\)). The Kerr–Schild structure renders the field equations linear in \(h\) for this class, greatly simplifying analyses and clarifying the superposition with other aligned fields.\cite{Ehlers1962,Griffiths2001,Griffiths2009,DebneyKerrSchild1969}
In particular, we will consider aligned electromagnetic fields with \(A_v=0\) and no \(v\)-dependence, so that the only potentially nonzero components are \(F_{ui}\) and \(F_{ij}\) on the transverse plane. In the null/aligned sector relevant to electrovac pp--waves we impose \(F_{xy}=0\), and Maxwell's equations then enforce that \(F_{ui}\) is simultaneously divergence--free and curl--free in \((x,y)\), hence locally the gradient of a harmonic scalar. This harmonic restriction is the key consistency input behind the universality/cancellation results established in Secs.~\ref{subsec:arbPolEMwaveweakf}--\ref{TDC}.

Overall, the Brinkmann framework affords a concise and versatile setting for studying gravitational radiation, its polarization content, and its interplay with matter and gauge fields in curved spacetime.\cite{Kundt1961,Ehlers1962,Griffiths2009,Stephani2003} A concise geometric classification (Kundt/pp--wave structure, Petrov type, VSI property) is summarized in Appendix~\ref{app:geom-class}.

%---------------------------------------
\subsection{Coupling between a null electromagnetic wave and a pp-wave geometry}\label{subsec:EMppwacecoup}
%---------------------------------------
For a suitably aligned class of fields, the Einstein–Maxwell system:
\begin{equation}\label{EinsteinMaxwell}
\begin{split}
&R_{\mu\nu}-\tfrac{1}{2}R\,g_{\mu\nu}=\frac{8\pi G}{c^{4}}\, T_{\mu\nu}^{(\mathrm{EM})},\qquad
\nabla_{\nu}F^{\mu\nu}=0,\qquad F_{[\mu\nu;\lambda]}=0,\\
&T_{\mu\nu}^{(\mathrm{EM})}\equiv \frac{1}{\mu_{0}}\!\left(F_{\mu\alpha}F_{\nu}{}^{\alpha}-\tfrac{1}{4}g_{\mu\nu}F_{\alpha\beta}F^{\alpha\beta}\right),
\end{split}
\end{equation}
admits closed-form solutions within the Brinkmann (pp-wave) ansatz Eq. \eqref{Brinkmann_metric}. Here \(G\) is Newton’s constant and \(\mu_{0}\) the vacuum permeability (SI).\cite{Wald1984,Misner1973} 

We begin with the homogeneous plane--wave specialization, which already reproduces the universal isotropic \(r^2\) Brinkmann profile. In later sections we allow weak transverse dependence and show that Maxwell consistency forces it to reduce to harmonic (or pure-gauge) data at \(\mathcal O(\gamma)\).
We take an electromagnetic potential aligned with the wave vector:
\begin{equation}\label{Acurved}
A_{\mu}\,\mathrm{d}x_{\rm Br}^{\mu}=A_{x}(u)\,\mathrm{d}x + A_{y}(u)\,\mathrm{d}y,
\end{equation}
i.e., choose the gauge \(A_{v}=A_{u}=0\) with \(A_{x},A_{y}\) depending only on \(u\), and $x_{\rm Br}^{\mu}=(u,v,x,y)$. The only nonvanishing components of the field tensor are then:
\begin{equation}\label{Fnonzero}
F_{ux}=\partial_{u}A_{x}(u),\qquad F_{uy}=\partial_{u}A_{y}(u),
\end{equation}
and the invariants \(F_{\mu\nu}F^{\mu\nu}\) and \(F_{\mu\nu}{^{\star}F}^{\mu\nu}\) vanish, so the field is null.\cite{Bonnor1969,Griffiths2009}

The homogeneous plane--wave gauge choice Eqs. \eqref{Acurved}--\eqref{Fnonzero} is a convenient special case.
More generally, one may allow:
\begin{equation}\label{eq:aligned-general-A}
A = A_u(u,x,y)\,\mathrm d u + A_i(u,x,y)\,\mathrm d x^i,
\qquad A_v=0,\qquad \partial_v A_\mu=0,
\end{equation}
so that
\(
F_{ui}=\partial_u A_i-\partial_i A_u
\)
and
\(
F_{ij}=\partial_i A_j-\partial_j A_i
\).
Since \(\sqrt{-g}=1\) for the Brinkmann metric, the source--free Maxwell equations reduce to
\(\partial_\nu F^{\mu\nu}=0\).
With the aligned ansatz Eq. \eqref{eq:aligned-general-A} this gives, for the relevant components:
\begin{equation}\label{eq:Maxwell-transverse}
\partial_i F_{ui}=0,
\qquad
\partial_j F_{ij}=0,
\qquad i,j\in\{x,y\}.
\end{equation}
Imposing the \emph{null/aligned} condition \(F_{xy}=0\) makes the second equation in Eq. \eqref{eq:Maxwell-transverse} identically satisfied, while the Bianchi identity \(F_{[\mu\nu;\lambda]}=0\) (equivalently \(dF=0\)) yields the transverse curl constraint:
\begin{equation}\label{eq:curlfree}
\partial_i F_{uj}-\partial_j F_{ui}=0.
\end{equation}
On a simply connected transverse patch, Eq. \eqref{eq:curlfree} implies the existence of a scalar potential \(\Phi(u,x,y)\) such that:
\begin{equation}\label{eq:Fui-harmonic}
F_{ui}=\partial_i \Phi(u,x,y),
\qquad
\Delta_\perp \Phi(u,x,y)=0,
\end{equation}
where the harmonic condition follows from \(\partial_i F_{ui}=0\) in Eq. \eqref{eq:Maxwell-transverse}.
Thus, in electrovac Brinkmann pp--waves with aligned null Maxwell fields, the admissible transverse structure of the electromagnetic wave is not an arbitrary envelope but harmonic data (modulo gauge).
The homogeneous plane wave Eq. \eqref{Fnonzero} corresponds to \(\Phi(u,x,y)=a'_x(u)\,x+a'_y(u)\,y\), whose gradient is constant in \((x,y)\).
This harmonic restriction is the starting point for our Maxwell--consistency analysis of weak transverse ``envelopes'' in Sec.~\ref{subsec:arbPolEMwaveweakf}.

With this ansatz, the (electromagnetic) stress–energy has a single nonzero component:
\begin{equation}\label{Tuu}
T_{uu}^{(\mathrm{EM})}=\frac{1}{\mu_{0}}\!\left(F_{ux}F_{u}{}^{x}+F_{uy}F_{u}{}^{y}\right)
=\frac{1}{\mu_{0}}\!\Big[(\partial_{u}A_{x})^{2}+(\partial_{u}A_{y})^{2}\Big],
\end{equation}
using \(g^{xx}=g^{yy}=1\) in the Brinkmann metric.\cite{Bonnor1969} Since the only nontrivial curvature is \(R_{uu}\), the Einstein equations reduce to:
\begin{equation}\label{Huxyeq}
-\tfrac{1}{2}\,\Delta_{\perp} h(u,x,y)=\frac{8\pi G}{c^{4}}\,T_{uu}^{(\mathrm{EM})},
\qquad\Longrightarrow\qquad
\Delta_{\perp} h(u,x,y)=-\frac{16\pi G}{c^{4}\mu_{0}}\Big[(\partial_{u}A_{x})^{2}+(\partial_{u}A_{y})^{2}\Big],
\end{equation}
in agreement with the general pp–wave analysis.\cite{Griffiths2009,AyonBeato2007b} Thus, in an electrovac pp‑wave the \emph{transverse Laplacian} of the Brinkmann profile is sourced directly by the instantaneous energy flux of the aligned electromagnetic wave. The bracketed term is nonnegative in SI units, so that $\Delta_{\perp} h\le 0$ for positive energy density. A genuine gravitational contribution still requires transverse quadratic dependence of \(h\) on \((x,y)\) -- any term depending only on \(u\) is pure gauge and carries no curvature.\cite{Rosen1937,Griffiths2009,Stephani2003}

%---------------------------------------
\section{Arbitrarily polarized electromagnetic wave with weak transverse inhomogeneity}%
\label{subsec:arbPolEMwaveweakf}
%---------------------------------------
We consider a monochromatic plane electromagnetic wave propagating along $+z$ (the Brinkmann null direction $k=\partial_v$), with a weak, slowly varying transverse inhomogeneity. Let:
\begin{equation}
\theta(u)=\frac{\omega}{c}\,u+\phi_0,
\label{eq:theta_def}
\end{equation}
and parametrize the polarization by the ellipse orientation $\psi$, the ellipticity angle $\chi\in[-\tfrac{\pi}{4},\tfrac{\pi}{4}]$ (linear -- $\chi=0$, circular -- $|\chi|=\tfrac{\pi}{4}$), and the handedness $\sigma=\pm1$ (right/left). Here \(\omega\) is the angular frequency conjugate to the Brinkmann retarded time \(u\).
Since \(u=(ct-z)/\sqrt2\), a plane wave with inertial (laboratory) angular frequency \(\omega_{\rm phys}\)
and phase \(\omega_{\rm phys}(t-z/c)+\phi_0\) corresponds to
$\theta(u)=(\sqrt2\,\omega_{\rm phys}/c)\,u+\phi_0,$
so that in Eq. \eqref{eq:theta_def} one has \(\omega=\sqrt2\,\omega_{\rm phys}\).
All ``frequencies'' below are understood in this \(u\)-conjugate sense unless stated otherwise. 

A convenient \emph{real} representation of the transverse components of the vector potential in Brinkmann coordinates is:
\begin{equation}\label{Aarbreal}
\begin{split}
&A_x(u,x,y)=\frac{\sqrt{2}\,E_0}{\omega}\,\big[1+\gamma\,f(x,y)\big]\;\Big(\cos\chi\,\cos\theta\,\cos\psi-\sigma\,\sin\chi\,\sin\theta\,\sin\psi\Big),\\
&A_y(u,x,y)=\frac{\sqrt{2}\,E_0}{\omega}\,\big[1+\gamma\,f(x,y)\big]\;\Big(\cos\chi\,\cos\theta\,\sin\psi+\sigma\,\sin\chi\,\sin\theta\,\cos\psi\Big),
\end{split}
\end{equation}
with $0<|\gamma|\ll1$ and $E_0$ setting the overall scale. The factor $\sqrt{2}$ is chosen so that $\langle(\partial_u A_x)^2+(\partial_u A_y)^2\rangle_u=E_0^2/c^2$ for \emph{any} polarization; for circular polarization the sum is also independent of $u$.\cite{Born1999,Jackson1999} For brevity we write $r^2\equiv x^2+y^2$.

The multiplicative ``envelope'' ansatz Eqs. \eqref{Aarbreal} with \(A_u=0\) is a \emph{naive} way to encode a weak
transverse modulation. Indeed, with \(A_u=0\) one has
\(F_{ui}=\partial_u A_i\propto a_i'(u)\,[1+\gamma f(x,y)]\), so the \(v\)-component Maxwell equation
\(\partial_i F_{ui}=0\) fails generically because \(\partial_i F_{ui}\propto \mathbf a'(u)\!\cdot\!\nabla f\neq0\)
for an arbitrary \(f\).

More importantly for the electrovac pp--wave sector of interest here, we restrict to \emph{aligned null} Maxwell fields,
i.e. \(A_v=0\), \(\partial_v A_\mu=0\), and \(F_{xy}=0\). In that sector Maxwell plus Bianchi enforce that
\(F_{ui}\) must be both divergence--free and curl--free on the \((x,y)\) plane, hence locally:
\begin{equation}\label{eq:harmonic-constraint-reminder}
F_{ui}=\partial_i\Phi(u,x,y),\qquad \Delta_\perp\Phi(u,x,y)=0,
\end{equation}
so the admissible transverse structure is \emph{harmonic} (modulo gauge/boundary conditions).
The theorem below gives an explicit Maxwell--consistent completion of the naive modulated potential and makes
the surviving harmonic freedom manifest.
 
\begin{theorem}\label{thm:gauge-completion}
Let the zeroth--order (homogeneous) plane--wave potential be:
\begin{equation}\label{zeroplanewave}
A^{(0)} = a_x(u)\,\mathrm d x+a_y(u)\,\mathrm d y,
\qquad a_i\in C^{1}(u),\qquad A^{(0)}_{u}=A^{(0)}_{v}=0,
\end{equation}
and fix any smooth transverse profile \(f\in C^\infty(\mathbb R^2)\) and a small parameter \(0<|\gamma|\ll1\).
Choose a particular solution \(w(x,y)\) of the Poisson equation:
\begin{equation}\label{eq:wPoisson}
\Delta_\perp w = f,
\end{equation}
and let \(\psi_{\rm H}(u,x,y)\) be any (possibly \(u\)-dependent) \emph{harmonic} scalar on the transverse plane:
\begin{equation}\label{eq:psiH}
\Delta_\perp \psi_{\rm H}(u,x,y)=0\qquad\text{for each fixed }u.
\end{equation}
Define the \(\mathcal O(\gamma)\) corrected potential by:
\begin{equation}\label{eq:AnsatzA-option1}
\begin{aligned}
A_u(u,x,y) &= \gamma\,\mathbf a'(u)\!\cdot\!\nabla w(x,y),\\
A_i(u,x,y) &= a_i(u)\;+\;\gamma\,\partial_i\!\Big[\mathbf a(u)\!\cdot\!\nabla w(x,y)\;+\;\psi_{\rm H}(u,x,y)\Big],
\qquad i=x,y,\\
A_v(u,x,y)&=0,
\end{aligned}
\end{equation}
where \(\mathbf a=(a_x,a_y)^{\top}\), \(\nabla=(\partial_x,\partial_y)^{\top}\).

Then:
\begin{enumerate}
\item The field is aligned and null:
\begin{equation}\label{eq:null-aligned}
F_{xy}=0,\qquad F_{\mu\nu}F^{\mu\nu}=F_{\mu\nu}{}^{\star}\!F^{\mu\nu}=0.
\end{equation}
\item Maxwell's equations \(\nabla_\nu F^{\mu\nu}=0\) and the Bianchi identity hold through \(\mathcal O(\gamma)\).
\item The physical radiative components can be written in the harmonic form:
\begin{equation}\label{eq:Fui-harmonic-form}
F_{ui}(u,x,y)=\partial_i\Phi(u,x,y),
\qquad
\Phi(u,x,y)=\mathbf a'(u)\!\cdot\!\mathbf x\;+\;\gamma\,\partial_u\psi_{\rm H}(u,x,y),
\qquad
\Delta_\perp\Phi=0,
\end{equation}
where \(\mathbf x=(x,y)^{\top}\).
In particular, \emph{all dependence on the arbitrary profile \(f(x,y)\) cancels out of \(F_{ui}\) at order \(\gamma\)}.
Nontrivial transverse structure survives only through the freely prescribable harmonic datum \(\psi_{\rm H}\).
\end{enumerate}

If, on \(\mathbb R^2\), one imposes that \(\psi_{\rm H}\) is bounded (or decays) as \(r\to\infty\), then \(\psi_{\rm H}\) is constant
and the \(\mathcal O(\gamma)\) correction vanishes: \(F_{ui}=a'_i(u)\).
\end{theorem}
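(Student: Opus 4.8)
The plan is to recognize the final claim as a slice-wise Liouville rigidity statement and reduce it to the classical theorem that a bounded harmonic function on all of \(\mathbb R^2\) is constant. By hypothesis \(\Delta_\perp\psi_{\rm H}(u,\cdot,\cdot)=0\) on the whole transverse plane for each fixed \(u\), so \(u\) enters only as a parameter. I would first fix \(u\) and treat \(\psi_{\rm H}(u,\cdot,\cdot)\) as a single harmonic function of \((x,y)\in\mathbb R^2\), then apply the Liouville argument to conclude it is constant in the transverse variables.

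For the Liouville step itself I would use the interior gradient estimate for harmonic functions: from the mean-value property one has \(|\nabla\psi_{\rm H}(u,\mathbf x)|\le (C/R)\,\sup_{B_R(\mathbf x)}|\psi_{\rm H}(u,\cdot)|\) for every ball \(B_R(\mathbf x)\subset\mathbb R^2\) and some absolute constant \(C\). Under the assumed boundedness the right-hand side is \(O(1/R)\), and letting \(R\to\infty\) forces \(\nabla\psi_{\rm H}(u,\mathbf x)=0\) at every point; hence \(\psi_{\rm H}=\psi_{\rm H}(u)\) is independent of \((x,y)\). (Equivalently, in two dimensions one may difference the mean values of \(\psi_{\rm H}\) over two large balls of radius \(R\) centred at distinct points; the symmetric-difference region has measure \(O(R)\) against a total \(O(R^2)\), so the difference is \(O(1/R)\to0\). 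The decay hypothesis is the stronger special case in which the common constant equals the limit at infinity.)

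With the transverse gradient annihilated, I would substitute back into the harmonic representation established in part~3 of the theorem, \(F_{ui}=\partial_i\Phi\) with \(\Phi=\mathbf a'(u)\!\cdot\!\mathbf x+\gamma\,\partial_u\psi_{\rm H}\). Since \(\psi_{\rm H}\) no longer depends on \((x,y)\), smoothness lets me commute \(\partial_u\) and \(\partial_i\), so that \(\partial_i(\partial_u\psi_{\rm H})=\partial_u(\partial_i\psi_{\rm H})=0\); the only surviving term is \(\partial_i(\mathbf a'(u)\!\cdot\!\mathbf x)=a_i'(u)\). This yields \(F_{ui}=a_i'(u)\), i.e. the entire \(\mathcal O(\gamma)\) correction drops out, as claimed.

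There is essentially no hard analysis here: the single nontrivial input is the global regularity of \(\psi_{\rm H}\) on all of \(\mathbb R^2\), which is exactly what makes Liouville applicable. The point I would emphasize as the real content — rather than a genuine obstacle — is the sharpness of the hypothesis. If the transverse plane is punctured or the field is allowed to be singular on a worldline, then nonconstant harmonic modes such as \(\log r\) or multipole terms \(\sim r^{-n}\) reappear and the cancellation fails. This is the precise sense in which regularity/decay, and nothing about the profile \(f\), selects the universal plane-wave limit \(F_{ui}=a_i'(u)\).
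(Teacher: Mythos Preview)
Your proposal treats parts 1--3 of the theorem as already established and argues only the final Liouville clause. But parts 1--3 are \emph{conclusions}, not hypotheses: the substantive content of the theorem is that the specific completion in Eq.~\eqref{eq:AnsatzA-option1} (i) gives $F_{xy}=0$, (ii) satisfies the source-free Maxwell equations in the Brinkmann background, and (iii) produces the explicit cancellation of the $w$-dependent (hence $f$-dependent) pieces in $F_{ui}$. None of these is addressed in your write-up. The central computation you are missing is
\[
F_{ui}=\partial_u A_i-\partial_i A_u
= a'_i(u)+\gamma\,\partial_i\!\big[\mathbf a'(u)\!\cdot\!\nabla w+\partial_u\psi_{\rm H}\big]-\gamma\,\partial_i\!\big[\mathbf a'(u)\!\cdot\!\nabla w\big]
= a'_i(u)+\gamma\,\partial_i\partial_u\psi_{\rm H},
\]
which is exactly where the arbitrary profile $f$ (carried by $w$ via $\Delta_\perp w=f$) drops out. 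You invoke this as ``the harmonic representation established in part~3,'' but that is precisely the statement to be proved. Likewise, $F_{xy}=0$ must be checked from $A_i=a_i+\gamma\,\partial_i\Psi$ (it follows from equality of mixed partials of $\Psi=\mathbf a\!\cdot\!\nabla w+\psi_{\rm H}$), and the $v$-component Maxwell equation $\partial_i F_{ui}=0$ must be verified (it reduces to $\Delta_\perp\Phi=0$, using harmonicity of $\mathbf a'\!\cdot\!\mathbf x$ and of $\psi_{\rm H}$).

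Your Liouville argument for the last paragraph is correct and carefully stated; the paper in fact takes that step as standard and does not spell it out. So the gap is not in what you wrote but in what you omitted: as a proof of the theorem, your proposal is incomplete because it never establishes the cancellation or the Maxwell consistency that make Eq.~\eqref{eq:Fui-harmonic-form} true in the first place.
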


\begin{proof}
For the Brinkmann metric \(\sqrt{-g}=1\), and with \(A_v=0\) and no \(v\)-dependence the Maxwell equations reduce to
\(\partial_\nu F^{\mu\nu}=0\) (the connection term vanishes because all nonzero Christoffels carry a lower \(u\) index
while \(F^{u\nu}=0\) for the aligned ansatz).

From Eq. \eqref{eq:AnsatzA-option1} we have \(A_i=a_i+\gamma\,\partial_i\Psi\) with
\(\Psi:=\mathbf a\!\cdot\!\nabla w+\psi_{\rm H}\). Hence:
\begin{equation}
F_{xy}=\partial_x A_y-\partial_y A_x
=\gamma\big(\partial_x\partial_y\Psi-\partial_y\partial_x\Psi\big)=0,
\end{equation}
which implies the null invariants in Eq. \eqref{eq:null-aligned} for this aligned class.

Next, we write:
\begin{equation}
F_{ui}=\partial_u A_i-\partial_i A_u
=a'_i(u)+\gamma\,\partial_i\!\Big[\mathbf a'(u)\!\cdot\!\nabla w\;+\;\partial_u\psi_{\rm H}\Big]
-\gamma\,\partial_i\!\big[\mathbf a'(u)\!\cdot\!\nabla w\big]
=a'_i(u)+\gamma\,\partial_i\partial_u\psi_{\rm H}.
\end{equation}
This is precisely Eq. \eqref{eq:Fui-harmonic-form} with
\(\Phi=\mathbf a'(u)\!\cdot\!\mathbf x+\gamma\,\partial_u\psi_{\rm H}\).
Since \(\Delta_\perp(\mathbf a'\!\cdot\!\mathbf x)=0\) and \(\Delta_\perp\psi_{\rm H}=0\),
it follows that \(\Delta_\perp\Phi=0\).

Finally, the \(v\)-component Maxwell equation is \(\partial_i F_{ui}=0\), which holds because
\(\partial_i F_{ui}=\Delta_\perp\Phi=0\).
The \(x,y\)-components reduce to \(\partial_j F_{ij}=0\), which are identically satisfied because \(F_{xy}=0\).
The Bianchi identity holds identically since \(F=\mathrm dA\).
\end{proof}

\begin{remarkx}\label{rem1}
A particularly convenient choice is to take:
\begin{equation}\label{ggradw}
g=\nabla w,\qquad \Delta_\perp w=f,
\end{equation}
so that the symmetric gradient
\(
S:=\tfrac12\!\left(\nabla g+\nabla g^{\!\top}\right)=\nabla\nabla w
\)
is the \emph{Hessian} of $w$. This has two immediate consequences used later:
\begin{equation}\label{trS}
\mathrm{tr}\,S=\Delta_\perp w=f,\qquad
S_{ij}=\partial_i\partial_j w=\partial_j\partial_i w .
\end{equation}
The completion for $A_u$ becomes:
\begin{equation}\label{Auw}
A_u(u,x,y)=\gamma\Big[a'_x(u)\,\partial_x w(x,y)+a'_y(u)\,\partial_y w(x,y)\Big],
\end{equation}
and its transverse derivatives are
\(
\partial_i A_u=\gamma\,[a'_x\,\partial_i\partial_x w+a'_y\,\partial_i\partial_y w]
=\gamma\,S_{ij}\,a'_j
\)
(with the obvious index convention). In particular:
\begin{equation}\label{Auf}
\Delta_\perp A_u
=\gamma\,[a'_x\,\Delta_\perp (\partial_x w)+a'_y\,\Delta_\perp (\partial_y w)]
=\gamma\,[a'_x\,\partial_x f+a'_y\,\partial_y f].
\end{equation}

In the null/aligned completion of Theorem~\ref{thm:gauge-completion}, the \(w\)-dependent pieces cancel in
\(F_{ui}=\partial_u A_i-\partial_i A_u\), so \(w\) (hence \(f\)) affects only the gauge potential, while the physical
field strength depends only on the harmonic datum \(\psi_{\rm H}\).
\end{remarkx}

\begin{remarkx}
Theorem~\ref{thm:gauge-completion} shows that in the aligned null sector the physical degrees of freedom reduce to a harmonic scalar.
One may alternatively obtain Maxwell-consistent potentials by solving transverse Poisson problems in a chosen gauge (e.g. Coulomb),
but the resulting field strength is the same and depends only on the harmonic datum.
\end{remarkx}

In the remainder we adopt the aligned null Maxwell completion of Theorem~\ref{thm:gauge-completion}.
Given a prescribed transverse profile \(f(x,y)\), choose any particular solution \(w\) of
\(\Delta_\perp w=f\) [Eq.~\eqref{eq:wPoisson}]. The remaining transverse freedom is encoded in a
harmonic scalar \(\psi_{\rm H}(u,x,y)\) satisfying \(\Delta_\perp\psi_{\rm H}=0\) for each fixed \(u\)
[Eq.~\eqref{eq:psiH}]. The physical radiative components are then
\(F_{ui}=\partial_i\Phi\) with \(\Delta_\perp\Phi=0\), i.e. the transverse structure is harmonic
[Eq.~\eqref{eq:Fui-harmonic-form}]. Under standard boundedness/decay conditions on \(\mathbb R^2\),
\(\psi_{\rm H}\) is constant and the \(\mathcal O(\gamma)\) correction to \(F_{ui}\) vanishes.

%-----------------------------
\subsection{Astrophysical motivation and the choice of \(f(x,y)\)}\label{Astromotiv}
%-----------------------------
Elliptically polarized radiation is ubiquitous in high-energy and radio
astrophysics. Intrinsic emission processes (e.g.\ synchrotron in ordered
magnetic fields, Zeeman-split maser lines) as well as propagation effects
through magnetized plasmas (Faraday rotation and conversion) naturally yield
general polarization states \((\psi,\chi,\sigma)\).\citep{Rybicki1979,Jackson1999,Melrose1980a,Melrose1980,Westfold1959,Jones1977,Sazonov1969,Wardle2003}
Moreover, realistic wavefronts are never perfectly planar across astrophysical
distances: finite source size, differential gravitational lensing (convergence
and shear), and interstellar scattering induce small transverse inhomogeneities
in phase and amplitude.\citep{Schneider1992,Petters2001,Blandford1986,Rickett1990,Narayan1989,Armstrong1995}

A common phenomenological way to encode such effects is to introduce a weak transverse
``envelope'' factor \(1+\gamma f(x,y)\), with \(0<|\gamma|\ll1\), multiplying the
homogeneous plane-wave potential [cf.\ Eq.~\eqref{Aarbreal}]. In the present work
this ansatz is used as a \emph{starting point} to test Maxwell consistency in the
strict electrovac Brinkmann pp--wave setting. The central message of the present Maxwell-consistent analysis
(Sec.~\ref{subsec:arbPolEMwaveweakf} and Sec.~\ref{Tuusource}) is that, within the aligned null sector
(\(A_v=0\), \(\partial_v A_\mu=0\), \(F_{xy}=0\)), the source--free Maxwell equations
force the physical radiative field \(F_{ui}\) to be both divergence--free and curl--free
on the transverse plane, hence locally the gradient of a harmonic scalar,
\(\Delta_\perp\Phi=0\) [cf.\ Eq.~\eqref{eq:harmonic-constraint-reminder} and
Theorem~\ref{thm:gauge-completion}]. Consequently, a generic smooth envelope \(f(x,y)\)
does \emph{not} survive as a physical \(\mathcal O(\gamma)\) transverse modulation of the
electrovac source \(T_{uu}\) (and thus of the Brinkmann profile \(h\)) once Maxwell’s
equations are enforced: any non-harmonic part is removed by the required completion,
and only genuinely harmonic transverse data can remain (subject to boundary/regularity
conditions).

Mathematically, we assume
\begin{equation}\label{assumptions}
f\in C^{\infty}(\mathbb R^2),\qquad
\|f\|_{L^\infty}\lesssim 1,\qquad
\varepsilon:=\frac{\lambda}{2\pi L}\ll1,
\end{equation}
where \(L\) is the characteristic transverse variation scale of \(f\) and
\(\lambda\) the carrier wavelength. The smallness parameter \(\varepsilon\) encodes the
separation of scales (slowly varying envelope) typically invoked in beam/envelope
modelling, while \(|\gamma|\ll1\) ensures weak modulation. In the strict electrovac
pp--wave analysis, however, the \emph{physical} transverse degrees of freedom are
ultimately restricted to harmonic data by Maxwell consistency; the role of \(f\) is
therefore to motivate and organize the completion and to isolate which parts of a
phenomenological modulation can be represented within the aligned null electrovac class.

From the electrovac viewpoint, the natural desiderata for transverse structure become:
(i) \emph{harmonic admissibility}: only transverse dependence compatible with
\(\Delta_\perp\Phi=0\) (and the chosen boundary conditions) can represent physical aligned null
Maxwell data at \(\mathcal O(\gamma)\);
(ii) \emph{centering} at the reference geodesic (no constant/dipole pieces in a local Taylor
expansion), so that the axis \(x=y=0\) corresponds to a natural reference worldline for the
near-axis curvature analysis in Sec.~\ref{Curvecont};
(iii) \emph{analytic tractability}: harmonic quadrupoles and higher harmonic multipoles provide
canonical local models of admissible transverse structure, and they also coincide with the standard
quadratic basis used to describe weak lensing shear and vacuum pp--wave polarizations.

\begin{remarkx}\label{rem:lens-to-f}
In geometric (eikonal) optics for a thin gravitational lens, the scaled lensing
potential $\psi_{\rm L}(\boldsymbol\theta)$ satisfies
$\boldsymbol\beta=\boldsymbol\theta-\nabla_\theta\psi_{\rm L}$, and the
\emph{Jacobi (amplification) matrix} is the $2\times2$ Jacobian:
\begin{equation}\label{Jacobimatrix}
\mathsf A(\boldsymbol\theta):=\frac{\partial\boldsymbol\beta}{\partial\boldsymbol\theta}
=\mathbb I-\nabla\nabla\psi_{\rm L}
= (1-\kappa)\,\mathbb I - \Gamma,
\qquad
\Gamma:=\begin{pmatrix}\gamma_1 & \gamma_2\\ \gamma_2 & -\gamma_1\end{pmatrix},
\end{equation}
with convergence $\kappa:=\tfrac12\,\Delta_\perp\psi_{\rm L}$ and shear components
$(\gamma_1,\gamma_2)$ defined by the traceless Hessian of $\psi_{\rm L}$.
\citep{Schneider1992,Blandford1986,Petters2001}
The scalar \emph{magnification} is:
\begin{equation}\label{scalarmag}
\mu(\boldsymbol\theta) \;=\; \frac{1}{\det\mathsf A}
\;=\; \frac{1}{(1-\kappa)^2-|\gamma|^2},
\qquad |\gamma|^2:=\gamma_1^2+\gamma_2^2.
\end{equation}
Here $\gamma_1$ and $\gamma_2$ denote the standard weak-lensing shear components (not the modulation parameter $\gamma$ used elsewhere).
For weak lensing ($|\kappa|,|\gamma|\ll1$),
\begin{equation}\label{weaklensing}
\mu^{1/2} \;=\; 1+\kappa + \mathcal O\!\left(\kappa^2,\,\kappa|\gamma|,\,|\gamma|^2\right),
\end{equation}
so to first order the \emph{isotropic} amplitude change is set by $\kappa$, while the
\emph{anisotropic} distortion is encoded in the traceless shear matrix $\Gamma$.

Across a small image-plane patch one may Taylor-expand any slowly varying multiplicative modulation
to quadratic order in \(\mathbf x=(x,y)\) about the center. The anisotropic (shear) part must be
built from a traceless quadratic form. The unique quadratic scalars with vanishing Laplacian are
the two \emph{harmonic quadrupoles}:
\(
(x^2-y^2)/2\) and \(x\,y,
\)
which rotate into each other under in-plane rotations. Combining them gives the canonical
shear-aligned quadratic:
\begin{equation}\label{eq:f-shear-canonical}
f_{\rm shear}(x,y)
\;=\; \frac{1}{2}\Big[\gamma_1\,(x^2-y^2) + 2\,\gamma_2\,x y\Big]
\;=\; \frac{1}{2}\,\mathbf x^{\!\top}\Gamma\,\mathbf x.
\end{equation}

\begin{lemma}\label{lem1}
Let $\Gamma$ be any real $2\times2$ traceless symmetric matrix.
Then $f(\mathbf x)=\tfrac12\,\mathbf x^{\!\top}\Gamma\,\mathbf x$ satisfies
$\Delta_\perp f=\operatorname{tr}\Gamma=0$ and
$\nabla\nabla f=\Gamma$.
\end{lemma}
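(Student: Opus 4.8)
The plan is to verify both identities by direct differentiation of the explicit quadratic form, using the symmetry of $\Gamma$ at the single step where it is actually needed. First I would expand $f(\mathbf x)=\tfrac12\,\mathbf x^{\!\top}\Gamma\,\mathbf x=\tfrac12\sum_{i,j}\Gamma_{ij}x_i x_j$ with $(x_1,x_2)=(x,y)$, and apply the product rule to obtain $\partial_k f=\tfrac12\sum_{i,j}\Gamma_{ij}\bigl(\delta_{ik}x_j+x_i\delta_{jk}\bigr)=\tfrac12\bigl(\sum_j\Gamma_{kj}x_j+\sum_i\Gamma_{ik}x_i\bigr)$. The two sums coincide precisely because $\Gamma_{ik}=\Gamma_{ki}$, so they collapse to $\partial_k f=\sum_j\Gamma_{kj}x_j=(\Gamma\mathbf x)_k$. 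This is the only place the symmetry hypothesis enters; for a general (non-symmetric) matrix one would instead recover the symmetric part $\tfrac12(\Gamma+\Gamma^{\!\top})$.

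Next I would differentiate once more. Since $\partial_k f=\sum_j\Gamma_{kj}x_j$ is linear in $\mathbf x$ with the constant coefficients $\Gamma_{kj}$, a second derivative gives $\partial_l\partial_k f=\Gamma_{kl}$, i.e.\ the matrix identity $\nabla\nabla f=\Gamma$, which is the second assertion. The first assertion then follows immediately by tracing the Hessian: $\Delta_\perp f=\sum_k\partial_k^2 f=\sum_k\Gamma_{kk}=\operatorname{tr}\Gamma$, and this vanishes by the traceless hypothesis.

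As a consistency check I would also carry out the computation in the explicit parametrization $\Gamma=\left(\begin{smallmatrix}\gamma_1 & \gamma_2\\ \gamma_2 & -\gamma_1\end{smallmatrix}\right)$ from Eq.~\eqref{Jacobimatrix}, for which $f=\tfrac12\bigl[\gamma_1(x^2-y^2)+2\gamma_2\,xy\bigr]$ as in Eq.~\eqref{eq:f-shear-canonical}. A short direct calculation gives $\partial_x^2 f=\gamma_1$, $\partial_y^2 f=-\gamma_1$, and $\partial_x\partial_y f=\gamma_2$, so that $\Delta_\perp f=\gamma_1-\gamma_1=0$ and $\nabla\nabla f=\Gamma$, reproducing the coordinate-free result.

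There is no genuine obstacle here: the lemma is essentially the elementary fact that the Hessian of a quadratic form returns its (symmetric) coefficient matrix, and that harmonicity of such a form is equivalent to tracelessness. The only point requiring a moment's care is the invocation of $\Gamma=\Gamma^{\!\top}$ in the gradient step; everything else is routine differentiation, and the two presentations (index/coordinate-free versus explicit $2\times2$ entries) agree by construction.
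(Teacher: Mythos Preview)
Your argument is correct and follows the same approach as the paper's proof, which simply states that direct differentiation gives $\partial_i\partial_j f=\Gamma_{ij}$ and $\Delta_\perp f=\operatorname{tr}\Gamma=0$. You have merely unpacked that one-liner, making explicit where the symmetry of $\Gamma$ enters and adding a redundant coordinate check.
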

\begin{proof} Direct differentiation gives
$\partial_i\partial_j f = \Gamma_{ij}$ and
$\Delta_\perp f = \operatorname{tr}\Gamma = 0$.
\end{proof}

Equation~\eqref{eq:f-shear-canonical} highlights why harmonic quadrupoles are natural test profiles:
in a locally weakly lensed wavefront, the leading anisotropic modulation is harmonic and aligned
with the shear principal axes. This dovetails with the electrovac pp--wave constraint
\(\Delta_\perp\Phi=0\): the shear quadrupole belongs to the lowest nontrivial harmonic sector that can
represent aligned null Maxwell transverse data (through the harmonic datum \(\psi_{\rm H}\) in
Theorem~\ref{thm:gauge-completion}) or, independently, the vacuum gravitational pp--wave sector which is
also governed by harmonic transverse profiles.

By contrast, the isotropic convergence contribution \(\propto r^2\) is \emph{not} harmonic and therefore
cannot be realized as a source--free aligned null Maxwell modulation within the strict Brinkmann electrovac
pp--wave class. In the Maxwell-consistent electrovac framework developed here this non-harmonic envelope content is precisely what is removed by
Maxwell consistency: it does not contribute to the electrovac source \(T_{uu}\) at \(\mathcal O(\gamma)\)
once the completion is imposed.
\end{remarkx}

In summary, the envelope factor \(1+\gamma f(x,y)\) is a flexible astrophysical \emph{motivation}, but the strict
electrovac Brinkmann pp--wave analysis shows that its generic (non-harmonic) transverse content does not backreact
on the metric at \(\mathcal O(\gamma)\). The physically admissible transverse structure in aligned null electrovac
pp--waves is harmonic (subject to boundary conditions), with harmonic quadrupoles providing the canonical local
example. Under the standard convention that excludes additional harmonic transverse modes on \(\mathbb R^2\),
the on-axis and near-axis curvature is universal and determined solely by the homogeneous electromagnetic energy
flux and the polarization-controlled oscillatory factor (Sec.~\ref{Curvecont}); modelling genuinely localized
beam envelopes or stochastic screens requires physics beyond the strict electrovac pp--wave setting (e.g.\ external
currents, non-null field components, or more general Kundt/gyraton geometries).

%-----------------------------
\subsection{Electromagnetic source \(T_{uu}\) to \(\mathcal O(\gamma)\)}\label{Tuusource}
%-----------------------------
For an aligned Maxwell field in the Brinkmann background (with \(A_v=0\) and no
\(v\)-dependence), the only stress--energy component that sources the profile
\(h(u,x,y)\) is:\cite{Bonnor1969,AyonBeato2007b,Griffiths2009}
\begin{align}
T^{(\mathrm{EM})}_{uu}(u,x,y)
&=\frac{1}{\mu_{0}}\Big(F_{ux}F_{u}{}^{x}+F_{uy}F_{u}{}^{y}\Big)
=\frac{1}{\mu_0}\sum_{i=x,y}\big(\partial_u A_i - \partial_i A_u\big)^2\nonumber\\
&=\frac{1}{\mu_0}\Big[(\partial_{u}A_{x})^{2}+(\partial_{u}A_{y})^{2}\Big]
-\frac{2}{\mu_0}\sum_{i=x,y}(\partial_{u}A_{i})(\partial_{i}A_{u})
+\mathcal O(\gamma^{2}),
\label{eq:Tuu_def}
\end{align}
where we used \(A_u=\mathcal O(\gamma)\) so that \((\partial_i A_u)^2=\mathcal O(\gamma^2)\)
can be neglected at the present order.

It is convenient to collect the homogeneous derivatives into the column vector:
\begin{equation}
\mathbf a'(u)\equiv\big[a'_x(u),\,a'_y(u)\big]^{\!\top},\qquad
\widehat{\mathbf a}'(u)\equiv\frac{c}{E_0}\,\mathbf a'(u),
\label{homderiv}
\end{equation}
so that \(\widehat{\mathbf a}'\) is dimensionless and its squared norm is proportional
to the instantaneous EM energy flux. We define the \emph{instantaneous polarization
dyadic}:
\begin{equation}
\label{eq:D_decomp}
\mathbf D(u)\;\equiv\;\widehat{\mathbf a}'(u)\,\widehat{\mathbf a}'(u)^{\!\top}
\;=\;\mathbf P(\psi,\chi)\;+\;\cos(2\theta)\,\mathbf Q_{c}(\psi,\chi)\;+\;\sin(2\theta)\,\mathbf Q_{s}(\psi,\chi,\sigma),
\end{equation}
so that \(\langle \widehat{\mathbf a}'{}^{\!\top}\widehat{\mathbf a}'\rangle_u=1\).
Writing \(\mathbf e_1=(\cos\psi,\sin\psi)^{\!\top}\) and \(\mathbf e_2=(-\sin\psi,\cos\psi)^{\!\top}\),
one finds the closed forms:
\begingroup
  \refstepcounter{equation}
  \label{eqQcQsdef}
  \xdef\eqQcQsdef{\theequation}
  \addtocounter{equation}{-1}
\endgroup
\begin{align}
\mathbf Q_{c}(\psi,\chi)&= \sin^{2}\!\chi\;\mathbf e_2\mathbf e_2^{\!\top}
-\cos^{2}\!\chi\;\mathbf e_1\mathbf e_1^{\!\top},
&
\mathrm{tr}\,\mathbf Q_c&=-\cos 2\chi,\tag{\eqQcQsdef.1}\label{eqQcQsdef-1}\\[2pt]
\mathbf Q_{s}(\psi,\chi,\sigma)&=-\frac{\sigma}{2}\,\sin 2\chi\;
\big(\mathbf e_1\mathbf e_2^{\!\top}+\mathbf e_2\mathbf e_1^{\!\top}\big),
&
\mathrm{tr}\,\mathbf Q_s&=0,\tag{\eqQcQsdef.2}\label{eqQcQsdef-2}
\end{align}
and, with the normalization of the \emph{homogeneous} plane wave (\(\gamma=0\)) in
Eqs.~\eqref{Aarbreal}, one has
\(
\mathrm{tr}\,\mathbf D(u)=\widehat{\mathbf a}'{}^{\!\top}\widehat{\mathbf a}'
=1-\cos 2\chi\,\cos(2\theta).
\)

In the aligned null sector (\(F_{xy}=0\)), Theorem~\ref{thm:gauge-completion} shows that
a naive transverse envelope can be completed so that the resulting physical field
strength takes the harmonic form:
\begin{equation}\label{eq:Fui-option1}
F_{ui}(u,x,y)=a'_i(u)\;+\;\gamma\,\partial_i\partial_u\psi_{\rm H}(u,x,y),
\qquad \Delta_\perp\psi_{\rm H}(u,\cdot)=0\ \ \text{for each fixed }u,
\end{equation}
and, crucially, \emph{all dependence on the arbitrary profile \(f(x,y)\) cancels from \(F_{ui}\) at
\(\mathcal O(\gamma)\)}. Expanding the quadratic contraction yields:
\begin{equation}
F_{ux}^2+F_{uy}^2
=\sum_{i=x,y}\Big[a'_i+\gamma\,\partial_i\partial_u\psi_{\rm H}\Big]^2
=a'_i a'_i+2\gamma\,a'_i\,\partial_i\partial_u\psi_{\rm H}
+\mathcal O(\gamma^2).
\label{eq:Fusq-option1}
\end{equation}
To express the correction compactly, define the dimensionless harmonic-completion vector and scalar:
\begingroup
  \refstepcounter{equation}
  \label{eqXidef}
  \xdef\eqXidef{\theequation}
  \addtocounter{equation}{-1}
\endgroup
\begin{align}
&\mathbf J_{\rm H}(u,x,y):=\frac{c}{E_0}\,\nabla\big[\partial_u\psi_{\rm H}(u,x,y)\big],\tag{\eqXidef.1}\label{eqXidef-1}\\[2pt]
&\Xi_{\rm H}(u,x,y):=\widehat{\mathbf a}'(u)^{\!\top}\mathbf J_{\rm H}(u,x,y)
=\frac{c^{2}}{E_0^{2}}\,a'_i(u)\,\partial_i\partial_u\psi_{\rm H}(u,x,y).\tag{\eqXidef.2}\label{eqXidef-2}
\end{align}
Using \(a'_i a'_i=\tfrac{E_0^2}{c^2}\,\mathrm{tr}\,\mathbf D(u)\), Eq.~\eqref{eq:Fusq-option1} becomes:
\begin{equation}
F_{ux}^2+F_{uy}^2
=\frac{E_0^2}{c^2}\Big\{\mathrm{tr}\,\mathbf D(u)+2\gamma\,\Xi_{\rm H}(u,x,y)\Big\}
+\mathcal O(\gamma^2),
\end{equation}
and therefore the electromagnetic source reads:
\begin{equation}
\label{eq:Tuu_instant_general-deriv}
T^{(\mathrm{EM})}_{uu}(u,x,y)
=\frac{E_0^2}{c^{2}\mu_0}\Big\{\mathrm{tr}\,\mathbf D(u)+2\gamma\,\Xi_{\rm H}(u,x,y)\Big\}
+\mathcal O(\gamma^{2}).
\end{equation}

\begin{remarkx}\label{rem:Xi-coulomb-cancel}
The only possible \(\mathcal O(\gamma)\) transverse dependence of \(T_{uu}\) in the aligned null electrovac
pp--wave sector is the harmonic-completion contribution \(\Xi_{\rm H}\), i.e. genuinely Maxwell-admissible
harmonic data. In many physical settings one excludes nontrivial harmonic complements:
for example, on \(\mathbb R^2\) any bounded (or decaying) harmonic \(\psi_{\rm H}\) is constant, so
\(\nabla(\partial_u\psi_{\rm H})\equiv0\) and hence \(\Xi_{\rm H}\equiv0\).
Similarly, on a periodic domain \(\mathbb T^2\) the only harmonic functions are constants (after fixing the
zero mode), so \(\Xi_{\rm H}\equiv0\).
Under such standard conditions the stress--energy is spatially homogeneous to \(\mathcal O(\gamma)\):
\begin{equation}\label{TuutrD}
T^{(\mathrm{EM})}_{uu}(u,x,y)
=\frac{E_0^2}{c^{2}\mu_0}\,\mathrm{tr}\,\mathbf D(u)+\mathcal O(\gamma^2),
\end{equation}
which is the source universality underlying the curvature universality results of Sec.~\ref{Curvecont}. A convenient particular solution of the Einstein equation
$\Delta_\perp h_{\rm EM}=-(16\pi G/c^4)\,T_{uu}^{(\mathrm{EM})}$ is then:
\begin{equation}\label{hEMsolution}
h_{\rm EM}(u,x,y)
=-\mathcal C\,\mathrm{tr}\,\mathbf D(u)\;r^2
+\mathcal O(\gamma^2),\ \text{with}\ \mathcal{C}=\frac{4\pi G E_0^{2}}{c^{6}\mu_{0}},
\end{equation}
as any additional solution of $\Delta_\perp h_{\rm vac}=0$ would represent an \emph{independent}
vacuum pp--wave superposed on the electrovac solution and is set to zero by convention.
\end{remarkx}

%-----------------------------
\subsection{Cycle-averaged profile}\label{Cycleavrgprof}
%-----------------------------
For many purposes it is natural to consider the cycle average of all
quantities in \(u\) (or, equivalently, in the phase \(\theta\)). For a
monochromatic wave this is the same as the time average over many optical
cycles, and we denote it by \(\langle\cdot\rangle_u\). At the level
of the polarization dyadic \(\mathbf D(u)=\widehat{\mathbf a}'(u)\widehat{\mathbf a}'(u)^{\!\top}\)
this yields the \emph{polarization projector}:
\begin{equation}
\mathsf P(\psi,\chi)\;\equiv\;\big\langle\widehat{\mathbf a}'\,\widehat{\mathbf a}'{}^{\!\top}\big\rangle_u
=\frac12\Big[\mathbb{I}_{2}+\cos 2\chi\; \mathsf M(\psi)\Big],\qquad
\mathsf M(\psi)=\begin{pmatrix}\cos 2\psi&\sin 2\psi\\[2pt]\sin 2\psi&-\cos 2\psi\end{pmatrix},
\label{eq:PolarizationProjector}
\end{equation}
which is independent of the handedness \(\sigma\). In particular,
\(\mathrm{tr}\,\mathsf P=\langle\widehat{\mathbf a}'{}^{\!\top}\widehat{\mathbf a}'\rangle_u=1\)
by our normalization of the homogeneous plane wave [Eqs.~\eqref{Aarbreal} at \(\gamma=0\)].

Averaging Eq.~\eqref{eq:Tuu_instant_general-deriv} over one optical cycle gives:
\begin{equation}
\big\langle T^{(\mathrm{EM})}_{uu}\big\rangle_{u}(x,y)
=\frac{E_0^2}{c^{2}\mu_0}\Big\{\big\langle \mathrm{tr}\,\mathbf D(u)\big\rangle_u
\;+\;2\gamma\,\big\langle\Xi_{\rm H}(u,x,y)\big\rangle_u\Big\}
+{\mathcal O}(\gamma^{2}),
\label{eq:Tuu_avg_general1}
\end{equation}
where \(\Xi_{\rm H}\) is the harmonic-completion scalar defined in Eq.~\eqref{eqXidef-2}.
Since \(\mathrm{tr}\,\mathbf D(u)=1-\cos 2\chi\,\cos(2\theta)\), one has
\(\langle \mathrm{tr}\,\mathbf D\rangle_u=1\), and therefore:
\begin{equation}
\big\langle T^{(\mathrm{EM})}_{uu}\big\rangle_{u}(x,y)
=\frac{E_0^2}{c^{2}\mu_0}\Big\{1
\;+\;2\gamma\,\big\langle\Xi_{\rm H}(u,x,y)\big\rangle_u\Big\}
+{\mathcal O}(\gamma^{2}).
\end{equation}

\begin{remarkx}\label{rem:avg-universality}
In the Maxwell-consistent aligned-null sector, all dependence on the arbitrary transverse envelope
\(f(x,y)\) cancels from \(T_{uu}\) at \(\mathcal O(\gamma)\). Any remaining transverse dependence at this
order is carried only by the harmonic complement \(\psi_{\rm H}\) (equivalently \(\Xi_{\rm H}\)).
Under standard boundary/gauge conventions that exclude nontrivial harmonic complements
(e.g.\ bounded/decaying data on \(\mathbb R^2\), or zero-mode projection on \(\mathbb T^2\)),
one has \(\nabla(\partial_u\psi_{\rm H})\equiv0\) and hence \(\Xi_{\rm H}\equiv0\). In that case
\(\langle T_{uu}\rangle_u\) is spatially homogeneous through \(\mathcal O(\gamma)\).
\end{remarkx}

The cycle-averaged Brinkmann profile \(h_{\rm EM}^{\langle\cdot\rangle}\) satisfies:
\begin{equation}
\Delta_\perp h_{\rm EM}^{\langle\cdot\rangle}(x,y)
=-\frac{16\pi G}{c^4}\,\big\langle T_{uu}^{(\mathrm{EM})}\big\rangle_u(x,y).
\end{equation}
Because the cycle-averaged source has a nonzero spatial mean, any particular solution on \(\mathbb R^2\)
necessarily grows like \(r^2\) (since \(\Delta_\perp r^2=4\)), so one cannot impose
\(h_{\rm EM}^{\langle\cdot\rangle}\to 0\) as \(r\to\infty\). We therefore fix the harmonic ambiguity in \(h\)
by excluding additional vacuum pp--wave modes (harmonic functions/polynomials) and by dropping \(u\)-only gauge
pieces, which carry no curvature.

To display the \(\mathcal O(\gamma)\) structure in the most general (harmonic-complement) case, introduce a
potential \(\Phi_{\Xi}\) solving:
\begin{equation}
\Delta_\perp \Phi_{\Xi}(x,y)=4\,\big\langle\Xi_{\rm H}(u,x,y)\big\rangle_u.
\label{eq:PhiXi-def}
\end{equation}
A convenient particular solution is then:
\begin{equation}
h_{\rm EM}^{\langle\cdot\rangle}(x,y)
=-\mathcal{C}\left\{r^{2}\;+\;2\,\gamma\,\Phi_{\Xi}(x,y)\right\}
+\mathcal O(\gamma^{2}).
\label{eq:hEM_avg_general}
\end{equation}

\begin{remarkx}\label{rem:havg-universal}
Under the standard harmonic-complement removal of Remark~\ref{rem:avg-universality} one has
\(\Phi_{\Xi}\equiv0\), and Eq.~\eqref{eq:hEM_avg_general} reduces to the universal, homogeneous profile
\(
h_{\rm EM}^{\langle\cdot\rangle}(x,y)=-\mathcal C\,r^{2}+\mathcal O(\gamma^{2}),
\)
independent of the choice of transverse modulation \(f(x,y)\) at \(\mathcal O(\gamma)\).
\end{remarkx}

%-----------------------------
\subsection{Time-dependent correction}\label{TDC}
%-----------------------------
We now isolate the purely oscillatory (AC) part of the source by subtracting
the cycle average. Define:
\[
T^{(\mathrm{EM})\,\mathrm{osc}}_{uu}
:= T^{(\mathrm{EM})}_{uu}-\big\langle T^{(\mathrm{EM})}_{uu}\big\rangle_u .
\]
In the Maxwell-consistent completion emphasized in Sec.~\ref{Tuusource},
all \emph{profile-driven} \(\mathcal O(\gamma)\) contributions cancel in \(T_{uu}\):
the stress--energy takes the universal form Eq. \eqref{eq:Tuu_instant_general-deriv}, where Eq. \eqref{eqXidef-2} collects only the possible \emph{harmonic-complement} free data
arising in the Coulomb representative (and vanishes under the standard convention
that excludes nontrivial harmonic complements; cf.\ Remark~\ref{rem:avg-universality}).

Using \(\mathrm{tr}\,\mathbf D(u)=1-\cos 2\chi\,\cos(2\theta)\), the oscillatory part is therefore:
\begin{equation}
\label{eq:Xi-osc-def}
\Xi^{\rm osc}_{\rm H}(u,x,y):=\Xi_{\rm H}(u,x,y)-\big\langle\Xi_{\rm H}(u,x,y)\big\rangle_u,
\end{equation}
and
\begin{equation}
\label{eq:Tuu_osc_general}
T^{(\mathrm{EM})\,\mathrm{osc}}_{uu}(u,x,y)
=\frac{E_0^2}{c^{2}\mu_0}\Big\{-\cos 2\chi\,\cos(2\theta)
+2\gamma\,\Xi^{\rm osc}_{\rm H}(u,x,y)\Big\}
+\mathcal O(\gamma^{2}).
\end{equation}
Thus, to \(\mathcal O(\gamma)\) the only unavoidable oscillatory source is the
homogeneous "breathing" term \(\propto \cos 2\chi\,\cos(2\theta)\), which vanishes for
circular polarization (\(\cos 2\chi=0\)). Any additional transverse structure at this order
is encoded solely in the harmonic-complement piece \(\Xi^{\rm osc}_{\rm H}\).

The oscillatory metric correction \(h_{\rm EM}^{\rm osc}\) satisfies:
\begin{equation}
\Delta_\perp h_{\rm EM}^{\rm osc}(u,x,y)
=-\frac{16\pi G}{c^4}\,T^{(\mathrm{EM})\,\mathrm{osc}}_{uu}(u,x,y).
\end{equation}
Introduce \(\Phi_{\Xi}^{\rm osc}\) by:
\begin{equation}\label{eq:PhiXi-osc-def}
\Delta_\perp \Phi_{\Xi}^{\rm osc}(u,x,y)=4\,\Xi^{\rm osc}_{\rm H}(u,x,y),
\end{equation}
so that \(\Delta_\perp r^2=4\).
A convenient particular solution is then:
\begin{equation}
\label{eq:hEM_osc_general}
h_{\rm EM}^{\rm osc}(u,x,y)
=\mathcal{C}\,\cos 2\chi\,\cos(2\theta)\,r^{2}
\;-\;2\mathcal{C}\,\gamma\,\Phi_{\Xi}^{\rm osc}(u,x,y)
\;+\;\mathcal O(\gamma^{2}).
\end{equation}
By construction \(\langle h_{\rm EM}^{\rm osc}\rangle_u=0\), so the full EM contribution to
the Brinkmann profile decomposes as
\(
h_{\rm EM}=h_{\rm EM}^{\langle\cdot\rangle}+h_{\rm EM}^{\rm osc}.
\)

\begin{remarkx}\label{rem:osc-universality}
Under the standard harmonic-complement removal (so \(\Xi_{\rm H}\equiv0\) and hence
\(\Phi_{\Xi}^{\rm osc}\equiv0\)), Eq.~\eqref{eq:Tuu_osc_general} reduces to
\(
T_{uu}^{(\mathrm{EM})\,\mathrm{osc}}=-(E_0^2/c^{2}\mu_0)\,\cos 2\chi\,\cos(2\theta)+\mathcal O(\gamma^2),
\)
and Eq.~\eqref{eq:hEM_osc_general} becomes the universal isotropic breathing profile
\(
h_{\rm EM}^{\rm osc}=\mathcal C\,\cos 2\chi\,\cos(2\theta)\,r^2+\mathcal O(\gamma^2),
\)
independent of the transverse modulation \(f(x,y)\) through \(\mathcal O(\gamma)\).
For circular polarization (\(\cos 2\chi=0\)) the oscillatory sector vanishes to this order.
\end{remarkx}

%-----------------------------
\subsection{Curvature content and near-axis structure}\label{Curvecont}
%-----------------------------
In Brinkmann form the only independent Riemann components are:
\begin{equation}
R_{uiuj}(u,x,y)=-\tfrac12\,\partial_i\partial_j h(u,x,y),
\qquad i,j\in\{x,y\}.
\end{equation}
All tidal information is therefore encoded in the transverse Hessian of the
Brinkmann profile, and the local structure of the curvature near a reference
geodesic (e.g.\ the axis $x=y=0$) is controlled by the Taylor expansion of
$h(u,x,y)$ in $(x,y)$.

In the Maxwell-consistent Coulomb representative of Theorem~\ref{thm:gauge-completion},
with the harmonic complement removed as in Remark~\ref{rem:Xi-coulomb-cancel}
(and adopting the same convention as in Secs.~\ref{Cycleavrgprof}–\ref{TDC} of
excluding additional vacuum pp--wave modes),
the electromagnetic stress--energy Eq. \eqref{TuutrD} is spatially homogeneous through $\mathcal O(\gamma)$.

Substituting Eq. \eqref{hEMsolution} into $R_{uiuj}=-\tfrac12\partial_i\partial_j h$ and using
$\partial_i\partial_j r^2=2\delta_{ij}$ gives the universal curvature tensor:
\begin{equation}\label{eq:Riemann_universal_full}
R_{uiuj}(u,x,y)
=\mathcal C\,\mathrm{tr}\,\mathbf D(u)\,\delta_{ij}
+\mathcal O(\gamma^2).
\end{equation}
In particular, \emph{to first order in the modulation parameter $\gamma$ the curvature is independent of the
transverse profile $f(x,y)$} (and in fact independent of $(x,y)$ altogether, once vacuum additions are excluded). The corresponding Petrov classification is discussed in Appendix~\ref{app:geom-class}. In particular, the universal electrovac profile is conformally flat (type O), while any independently superposed vacuum pp--wave yields type N.

\begin{proposition}\label{prop:avg-curvature}
In the Maxwell-consistent completion, the cycle-averaged curvature is:
\begin{equation}\label{eq:Riemann_static_general}
R_{uiuj}^{\langle\cdot\rangle}(x,y)
:=\big\langle R_{uiuj}\big\rangle_u
=\mathcal{C}\,\delta_{ij}
+\mathcal O(\gamma^{2}),
\end{equation}
independent of the polarization parameters $(\psi,\chi,\sigma)$ and of the transverse modulation $f(x,y)$
[at $\mathcal O(\gamma)$].
\end{proposition}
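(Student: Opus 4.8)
The plan is to derive the statement as a direct cycle-average of the universal curvature tensor Eq.~\eqref{eq:Riemann_universal_full}, in which all dependence on the transverse profile $f(x,y)$ has already been removed at $\mathcal O(\gamma)$ by Theorem~\ref{thm:gauge-completion} and the harmonic-complement removal of Remark~\ref{rem:Xi-coulomb-cancel}.

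First I would observe that in Eq.~\eqref{eq:Riemann_universal_full} the factors $\mathcal C$ and $\delta_{ij}$ are $u$-independent, so the cycle average acts only on the scalar prefactor $\mathrm{tr}\,\mathbf D(u)$:
\begin{equation}
R_{uiuj}^{\langle\cdot\rangle}(x,y)=\mathcal C\,\big\langle\mathrm{tr}\,\mathbf D(u)\big\rangle_u\,\delta_{ij}+\mathcal O(\gamma^2).
\end{equation}
Next I would insert the closed form $\mathrm{tr}\,\mathbf D(u)=1-\cos 2\chi\,\cos(2\theta)$ recorded below Eq.~\eqref{eqQcQsdef-2}. Because $\theta(u)=(\omega/c)u+\phi_0$ is affine in $u$, the oscillatory factor $\cos(2\theta)$ integrates to zero over one optical cycle, so $\langle\mathrm{tr}\,\mathbf D\rangle_u=1$ for every $(\psi,\chi,\sigma)$. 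Substituting yields $R_{uiuj}^{\langle\cdot\rangle}=\mathcal C\,\delta_{ij}+\mathcal O(\gamma^2)$, which is the assertion.

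It remains to verify the two claimed independences. The $f$-independence at $\mathcal O(\gamma)$ is inherited verbatim from Eq.~\eqref{eq:Riemann_universal_full}: the cancellation theorem already eliminates $f$ from $F_{ui}$, hence from $T_{uu}$ and from $h_{\rm EM}$, prior to averaging. The polarization-independence is the only genuinely new feature and follows because the sole $(\psi,\chi,\sigma)$-dependent contribution to $\mathrm{tr}\,\mathbf D$ is the breathing term $\cos 2\chi\,\cos(2\theta)$, which the cycle average annihilates exactly; the surviving DC value $\langle\mathrm{tr}\,\mathbf D\rangle_u=1$ is fixed by the polarization-agnostic normalization built into Eqs.~\eqref{Aarbreal}.

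I do not anticipate a substantive obstacle, since all the analytic content has been discharged upstream in Theorem~\ref{thm:gauge-completion} and in the elementary identity $\langle\cos 2\theta\rangle_u=0$; the proposition is essentially a one-line corollary of Eq.~\eqref{eq:Riemann_universal_full}. The one point warranting care arises only if one retains the harmonic complement $\psi_{\rm H}$ instead of setting it to zero: then the averaged profile acquires a term $-2\mathcal C\gamma\,\Phi_\Xi$ with $\Delta_\perp\Phi_\Xi=4\langle\Xi_{\rm H}\rangle_u$, and since the cycle average commutes with the transverse Hessian this feeds $\mathcal C\gamma\,\partial_i\partial_j\Phi_\Xi$ into $R_{uiuj}^{\langle\cdot\rangle}$, which is generally \emph{not} proportional to $\delta_{ij}$. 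The clean $\mathcal C\,\delta_{ij}$ form therefore relies on the standard boundary conventions of Remark~\ref{rem:avg-universality} that force $\Xi_{\rm H}\equiv0$.
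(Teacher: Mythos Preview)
Your proposal is correct and follows the same approach as the paper: start from the universal curvature Eq.~\eqref{eq:Riemann_universal_full}, cycle-average, and invoke $\langle\mathrm{tr}\,\mathbf D\rangle_u=1$. Your additional remarks on the origin of the polarization- and $f$-independence, and on the role of the harmonic-complement removal, are sound elaborations of the same argument.
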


\begin{proof}
From Eq.~\eqref{eq:Riemann_universal_full} one has
$R_{uiuj}=\mathcal C\,\mathrm{tr}\,\mathbf D(u)\,\delta_{ij}+\mathcal O(\gamma^2)$.
Taking the cycle average and using $\langle \mathrm{tr}\,\mathbf D\rangle_u=1$
gives Eq.~\eqref{eq:Riemann_static_general}.
\end{proof}

\medskip
We now analyse the purely time-dependent corrections. Defining the oscillatory part by
$R_{uiuj}^{\rm osc}:=R_{uiuj}-R_{uiuj}^{\langle\cdot\rangle}$ and using
$\mathrm{tr}\,\mathbf D(u)=1-\cos 2\chi\,\cos(2\theta)$, Eq.~\eqref{eq:Riemann_universal_full}
implies:
\begin{equation}\label{eq:Riemann-osc-universal}
R_{uiuj}^{\rm osc}(u,x,y)
= -\,\mathcal C\,\cos 2\chi\,\cos\!\big[2\theta(u)\big]\,\delta_{ij}
+\mathcal O(\gamma^2).
\end{equation}

\begin{proposition}\label{prop:osc-curvature}
In the Maxwell-consistent completion:
\begin{enumerate}
\item For generic (non-circular) polarization, $\cos 2\chi\neq 0$, the oscillatory curvature is:
\begin{equation}
R_{uiuj}^{\rm osc}(u,0)
= -\mathcal{C}\,\cos 2\chi\,
\cos\!\left[2\theta(u)\right]\,\delta_{ij}
+\mathcal O(\gamma^2),
\end{equation}
which is isotropic in the transverse plane, oscillates with phase $2\theta(u)$
(and hence at angular frequency $2\omega$ with respect to $u$), and is independent of the handedness~$\sigma$
to this order.
\item For circular polarization, $\cos 2\chi=0$, one has
\(
R_{uiuj}^{\rm osc}(u,0)=0
\)
to $\mathcal O(\gamma)$ [indeed to $\mathcal O(\gamma)$ everywhere in $(x,y)$ once vacuum modes are excluded].
\end{enumerate}
\end{proposition}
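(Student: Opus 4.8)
The plan is to read both claims directly off the already-established oscillatory curvature tensor in Eq.~\eqref{eq:Riemann-osc-universal}, which was obtained by subtracting the cycle-averaged curvature of Proposition~\ref{prop:avg-curvature} from the universal expression Eq.~\eqref{eq:Riemann_universal_full}. Because that expression is spatially homogeneous through $\mathcal O(\gamma)$ in the Maxwell-consistent completion with the harmonic complement removed, its restriction to the axis $(x,y)=0$ coincides with its value everywhere. I would state this homogeneity first, so that the axis evaluation $R^{\rm osc}_{uiuj}(u,0)$ is justified and the parenthetical ``everywhere in $(x,y)$'' strengthening in part~(2) becomes immediate rather than a separate computation.

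For part~(1), I would substitute the trace identity $\mathrm{tr}\,\mathbf D(u)=1-\cos 2\chi\,\cos(2\theta)$ into Eq.~\eqref{eq:Riemann_universal_full} and subtract the constant cycle-average $\langle\mathrm{tr}\,\mathbf D\rangle_u=1$, which leaves exactly $R^{\rm osc}_{uiuj}=-\mathcal C\,\cos 2\chi\,\cos[2\theta(u)]\,\delta_{ij}+\mathcal O(\gamma^2)$. Isotropy in the transverse plane is manifest from the $\delta_{ij}$ factor, and the claimed frequency follows from $\theta(u)=(\omega/c)u+\phi_0$, so that $\cos[2\theta(u)]$ oscillates at angular frequency $2\omega$ with respect to $u$.

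The one point deserving an explicit argument is the stated independence from the handedness $\sigma$. I would trace this to the decomposition of the instantaneous polarization dyadic in Eq.~\eqref{eq:D_decomp}: all $\sigma$-dependence sits in $\mathbf Q_s(\psi,\chi,\sigma)$, which multiplies $\sin(2\theta)$ and is \emph{traceless} by Eq.~\eqref{eqQcQsdef-2}. Since the curvature enters only through the scalar prefactor $\mathcal C\,\mathrm{tr}\,\mathbf D(u)\,\delta_{ij}$, the handedness contributes solely through $\mathrm{tr}\,\mathbf Q_s=0$ and therefore drops out identically. This is precisely why the result is $\sigma$-independent at leading order, and it remains so through $\mathcal O(\gamma)$ because Theorem~\ref{thm:gauge-completion} has already cancelled the profile-driven first-order corrections.

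For part~(2), setting $\cos 2\chi=0$ (circular polarization) in the same expression annihilates the oscillatory term outright, giving $R^{\rm osc}_{uiuj}(u,0)=0+\mathcal O(\gamma)$, with the ``everywhere'' version following from the spatial homogeneity noted above once the harmonic complement $\Xi_{\rm H}$ is excluded. I expect no genuine obstacle here: the content is a substitution plus the traceless property of $\mathbf Q_s$. The only real care needed is bookkeeping of the $\mathcal O(\gamma)$ error, i.e.\ confirming that the profile-driven $f(x,y)$ corrections have indeed cancelled (exactly the statement of Theorem~\ref{thm:gauge-completion}) so that the displayed leading behavior is uncontaminated at first order in $\gamma$.
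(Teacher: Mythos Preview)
Your proposal is correct and follows the same route as the paper: both simply specialize the universal oscillatory formula Eq.~\eqref{eq:Riemann-osc-universal} to the axis and observe that the prefactor vanishes when $\cos 2\chi=0$. Your write-up is considerably more detailed than the paper's one-line proof---in particular your explicit justification of $\sigma$-independence via $\mathrm{tr}\,\mathbf Q_s=0$ is a nice addition that the paper leaves implicit---but the underlying argument is identical. One small slip: in part~(2) you wrote ``$0+\mathcal O(\gamma)$'' where the error carried from Eq.~\eqref{eq:Riemann-osc-universal} is actually $\mathcal O(\gamma^2)$.
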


\begin{proof}
This is an immediate specialization of Eq.~\eqref{eq:Riemann-osc-universal} to $x=y=0$
(and the observation that the prefactor vanishes when $\cos 2\chi=0$).
\end{proof}

To connect with local measurements, project the curvature onto an orthonormal tetrad
\(e_{(0)}^\mu,e_{(i)}^\mu\) comoving with a timelike reference observer.
Along a chosen worldline one defines:
\begin{equation}
R_{0i0j}(T,0)
:= e_{(0)}^{\ \mu} e_{(i)}^{\ \nu} e_{(0)}^{\ \rho} e_{(j)}^{\ \sigma}
   R_{\mu\nu\rho\sigma}\big|_{\text{worldline}},
\end{equation}
where \(T\) is proper time and \(u=u(T)\). Since Brinkmann pp--waves have nonzero curvature
only in the \(uiuj\) components, one has along the worldline:
\begin{equation}
R_{0i0j}(T,0)=\big[\dot u(T)\big]^{2}\,R_{uiuj}\big[u(T),0\big],
\qquad \dot u:=\frac{\mathrm d u}{\mathrm d T}.
\end{equation}
Thus the tidal matrix measured by the observer has the same transverse tensor structure
as \(R_{uiuj}\), up to the positive scalar factor \(\dot u^{\,2}\). In particular,
Proposition~\ref{prop:osc-curvature} implies:
\begin{equation}
\label{eq:R0i0j-osc-general}
R_{0i0j}^{\rm osc}(T,0)
= -\big[\dot u(T)\big]^{2}\,\mathcal{C}\,\cos 2\chi\,
\cos\!\left\{2\theta[u(T)]\right\}\,\delta_{ij}
+\mathcal O(\gamma^2),
\end{equation}
for non-circular polarization, and \(R_{0i0j}^{\rm osc}(T,0)=0\) for circular polarization at this order.

\begin{remarkx}
The separation vector $X^i(T)$ between nearby geodesics in a local inertial frame satisfies the
geodesic deviation equation \(\ddot X^i(T) = - R_{0i0j}(T,0)\,X^j(T)\). Therefore
\(R_{0i0j}\) acts as a time-dependent \emph{tidal operator} governing local focusing/defocusing and oscillatory
distortions. In the Maxwell-consistent electrovac Brinkmann family, the tidal field
through $\mathcal O(\gamma)$ consists of a universal, polarization-independent cycle-averaged part plus an
isotropic oscillatory component present only for non-circular polarization. Any transverse structure at the
same perturbative order would have to come from an \emph{independently chosen} vacuum pp--wave addition
(\(\Delta_\perp h_{\rm vac}=0\)), which we exclude by convention when isolating the electrovac response.
\end{remarkx}

The electrovac Brinkmann solutions constructed here exhibit a universal curvature response at $\mathcal O(\gamma)$: the weak transverse modulation $f(x,y)$ does not alter the
Brinkmann curvature at this order, and the on-axis (indeed global) tidal field decomposes into a universal
cycle-averaged isotropic part and a polarization-controlled isotropic oscillatory part.

%-----------------------------
\subsection{Superposition with a vacuum gravitational wave}\label{GWsuperpos}
%-----------------------------
For Brinkmann pp--waves:
\begin{equation}\label{RuuGW}
R_{uu}=-\tfrac12\,\Delta_{\perp}h,\qquad
R_{uiuj}=-\tfrac12\,\partial_i\partial_j h \quad (i,j\in\{x,y\}),
\end{equation}
so adding any term depending \emph{only} on $u$ leaves $\Delta_{\perp}h$ unchanged and thus does not affect the field equations. Such an addition is pure gauge: the coordinate shift:
\begin{equation}\label{vmapstov}
v\;\mapsto\; v-\tfrac12\!\int^{u}\!\mathfrak f(\tilde u)\,\mathrm d\tilde u
\end{equation}
removes $\,\mathfrak f(u)$ from $h$, and produces no curvature because $R_{uiuj}$ depends only on transverse derivatives of $h$. Hence a genuine gravitational contribution requires nontrivial $(x,y)$--dependence (so that $\partial_i\partial_j h\neq0$); the simplest vacuum choice is quadratic and harmonic in $(x,y)$ with $u$--dependent amplitudes.

To model a plane vacuum GW aligned with the same null direction, add the standard Brinkmann vacuum (pp--wave) profile:
\begin{equation}\label{hG_quadratic}
h_{\mathrm{GW}}(u,x,y)
=\tfrac12\,H_{+}(u)\,(x^{2}-y^{2})
+H_{\times}(u)\,x y,
\end{equation}
for which $\Delta_{\perp}h_{\mathrm{GW}}=0$ and therefore $R_{uu}^{\mathrm{GW}}=0$ while:
\begin{equation}\label{eq:Ruiuj-GW-matrix}
R_{uiuj}^{\mathrm{GW}}
=-\frac12
\begin{pmatrix}
\partial_x^2 h_{\mathrm{GW}} & \partial_x\partial_y h_{\mathrm{GW}}\\[2pt]
\partial_y\partial_x h_{\mathrm{GW}} & \partial_y^2 h_{\mathrm{GW}}
\end{pmatrix}
=-\frac12
\begin{pmatrix}
H_{+}(u) & H_{\times}(u)\\[2pt]
H_{\times}(u) & -\,H_{+}(u)
\end{pmatrix}.
\end{equation}
For completeness, the Petrov type of the vacuum pp--wave sector and its superposition with the electrovac background are summarized in Appendix~\ref{app:geom-class}.

A convenient monochromatic specialization is:
\begin{equation}\label{HuGW}
H_{+}(u)=\mathcal A_{+}\cos\!\Big(\frac{\Omega}{c}\,u+\phi_{+}\Big),\qquad
H_{\times}(u)=\mathcal A_{\times}\cos\!\Big(\frac{\Omega}{c}\,u+\phi_{\times}\Big),
\end{equation}
with $\Omega$ the GW angular frequency conjugate to the Brinkmann coordinate $u$
(\(\Omega=\sqrt2\,\Omega_{\rm phys}\) for an inertial observer’s frequency \(\Omega_{\rm phys}\)). Right/left circular polarization corresponds to equal amplitudes and a $\pm\pi/2$ phase shift, e.g.
$H_{\times}(u)=\pm H_{+}\!\big(u+\tfrac{\pi c}{2\Omega}\big)$.

Because $R_{uu}=-\tfrac12\Delta_{\perp}h$ is \emph{linear} in $h$ for the Brinkmann ansatz, the sum:
\begin{equation}
\label{eq:h_total_general}
h(u,x,y)
=h_{\rm EM}^{\langle\cdot\rangle}(x,y)\;+\;h_{\rm EM}^{\rm osc}(u,x,y)\;+\;h_{\mathrm{GW}}(u,x,y)
\end{equation}
solves the Einstein--Maxwell system with the same electromagnetic source:
\begin{equation}
\Delta_{\perp}\Big[h_{\rm EM}^{\langle\cdot\rangle}(x,y)+h_{\rm EM}^{\rm osc}(u,x,y)\Big]
=-\,\frac{16\pi G}{c^{4}}\,T_{uu}^{(\mathrm{EM})}(u,x,y),
\qquad
\Delta_{\perp}h_{\mathrm{GW}}(u,x,y)=0.
\end{equation}
The total tidal field is obtained by differentiating Eq. \eqref{eq:h_total_general}:
\begin{equation}\label{Ruiuj_sum}
R_{uiuj}=-\tfrac12\,\partial_i\partial_j h
=R_{uiuj}^{\rm EM}\;+\;R_{uiuj}^{\mathrm{GW}},
\end{equation}
with $R_{uiuj}^{\rm EM}$ computed from the electrovac part and $R_{uiuj}^{\mathrm{GW}}$ given explicitly by Eq. \eqref{eq:Ruiuj-GW-matrix}.
\begin{remarkx}
The statements above hold for \emph{arbitrary} smooth functions $H_{+}(u),H_{\times}(u)$: any choice yields a vacuum pp--wave because $\Delta_{\perp}h_{\mathrm{GW}}=0$. The units are consistent with $[H_{\{+,\times\}}]=\text{length}^{-2}$ so that $h_{\mathrm{GW}}$ is dimensionless.
\end{remarkx}

\section{Results}\label{Results}
%------------------------------------------------------------
\subsection{Special transverse profiles \(f(x,y)\): normal forms and a recipe}\label{subsec:why-special-cases}
%------------------------------------------------------------

The purpose of introducing explicit transverse profiles \(f(x,y)\)
is \emph{not} to generate new \(\mathcal O(\gamma)\) gravitational structure in \(T_{uu}\) or in the
curvature (Sec.~\ref{Curvecont}), but rather:
(i) to construct explicit gauge--completed Maxwell potentials \(A_\mu\) realizing a prescribed weak
transverse modulation at the level of the vector potential, and
(ii) to provide analytically tractable testbeds [useful e.g.\ for higher--order \(\mathcal O(\gamma^2)\)
calculations, for off--axis field diagnostics, or for optional superposition with vacuum pp--waves].

At an intermediate algebraic level (before enforcing the cancellation in
Remark~\ref{rem:Xi-coulomb-cancel}) one may organize the \(\mathcal O(\gamma)\) correction to the
instantaneous flux in terms of three scalar contractions built from \(f\) and polarization:
an \emph{isotropic} modulation \(f\) itself, an \emph{anisotropic} projection\footnotemark[2]
\(S{:}D\) (with \(S=\nabla\nabla w\) in the canonical completion and \(D=\widehat{\mathbf a}'\widehat{\mathbf a}'{}^{\!\top}\)),
and a \emph{completion} scalar \(\Xi\) encoding the \(u\)-dependence of the Coulomb--gauge scalar
\(\psi(u,x,y)\) (equivalently, of the transverse gradient correction \(\gamma\,\nabla\psi\) in \(A_i\))
that is required by source--free Maxwell equations when \(f\neq0\) (Sec.~\ref{subsec:arbPolEMwaveweakf} and Sec.~\ref{Tuusource}).
In the explicit Coulomb representative, and under boundary conditions that remove the harmonic complement,
these \(\mathcal O(\gamma)\) terms cancel (up to removable harmonic data), yielding the universal result
Eq.~\eqref{TuutrD}.

\footnotetext[2]{We use the Frobenius (double‑contraction) product on $2\times2$ tensors:
$A{:}B := \mathrm{tr}(A^{\mathrm T}B) = \sum_{i,j} A_{ij} B_{ij}$. In particular,
$S{:}D=\sum_{i,j} S_{ij} D_{ij}$. If $D=\mathbf p\,\mathbf p^{\mathrm T}$ for a unit vector
$\mathbf p$, then $S{:}D=\mathbf p^{\mathrm T} S\,\mathbf p$. The scalar $A{:}B$ is rotation‑invariant:
$(RAR^{\mathrm T}){:}(RBR^{\mathrm T})=A{:}B$ for any orthogonal $R$.}

Given a smooth transverse profile \(f\) with \(0<|\gamma|\ll1\), a Maxwell--consistent electrovac solution
[to \(\mathcal O(\gamma)\) in the Maxwell sector and yielding the universal gravitational response
Eq. \eqref{hEMsolution}] can be constructed as follows:
\begin{enumerate}
\item \emph{(Choose the carrier wave)} Fix the homogeneous plane--wave functions \(a_x(u),a_y(u)\) [or equivalently
\(\mathbf a(u)\)] and define \(\mathbf D(u)=\widehat{\mathbf a}'(u)\widehat{\mathbf a}'(u)^{\!\top}\) as in
Sec.~\ref{Tuusource}. This fixes \(\mathrm{tr}\,\mathbf D(u)\) and hence the universal electrovac source
Eq. \eqref{TuutrD}.

\item \emph{(Canonical longitudinal completion)} Solve the scalar Poisson equation
\(\Delta_\perp w=f\) and set \(g=\nabla w\) (Remark~\ref{rem1}), so that
\(S=\nabla\nabla w\) and \(\mathrm{tr}\,S=f\).
Then define:
\begin{equation}
A_u(u,x,y)=\gamma\,\mathbf a'(u)\!\cdot\!\nabla w(x,y),
\qquad A_v=0.
\end{equation}

\item \emph{(Coulomb transverse completion by a single scalar)} For each fixed \(u\), solve the scalar Poisson equation:
\begin{equation}
\Delta_\perp\psi(u,x,y)=\mathbf a(u)\!\cdot\!\nabla f(x,y),
\end{equation}
with boundary/normalization conditions chosen so that the transverse harmonic complement is removed
(e.g.\ decay on \(\mathbb R^2\), or zero--mean on \(\mathbb T^2\)).
Then set:
\begin{equation}
A_i(u,x,y)=a_i(u)+\gamma\,\partial_i\psi(u,x,y),\qquad i=x,y,
\end{equation}
which is exactly the explicit representative in Theorem~\ref{thm:gauge-completion}.

\item \emph{(Maxwell consistency and universality of the source)} With the harmonic complement removed,
\(F_{ui}=a'_i(u)+\mathcal O(\gamma^2)\) and therefore we obtain Eq.~\eqref{TuutrD},
independent of \(f\) to \(\mathcal O(\gamma)\) (Remark~\ref{rem:Xi-coulomb-cancel}).

\item \emph{(Assemble \(h\))} Solve \(\Delta_\perp h=-(16\pi G/c^4)\,T_{uu}^{(\mathrm{EM})}\) and take the universal
particular solution Eq. \eqref{hEMsolution}. Any additional solution of \(\Delta_\perp h_{\rm vac}=0\) is a
vacuum pp--wave mode and may be included or excluded independently (Sec.~\ref{GWsuperpos}).
\end{enumerate}

\begin{lemma}
\label{lem:degree-two-normal-forms}
Every centered harmonic quadratic can be written as
\(
f(x,y)=a\,\frac{x^2-y^2}{2}+b\,x y
\)
with $a,b\in\mathbb R$.
Under a transverse rotation by angle $\varphi$,
\(
(a,b)\mapsto (a',b')=(a\cos2\varphi+b\sin2\varphi,\,-a\sin2\varphi+b\cos2\varphi).
\)
Thus any such $f$ is SO(2)‑equivalent to either
\(\tfrac12(x^2-y^2)\) or \(xy\).
\end{lemma}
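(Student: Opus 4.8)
The plan is to treat the space of centered harmonic quadratics as a two‑dimensional real vector space carrying a linear $SO(2)$ action, to identify that action explicitly, and to read off the normal form from its orbit structure. First I would establish the first sentence of the lemma by direct reduction. A centered homogeneous quadratic (no constant or linear part) has the form $f=\alpha x^2+\beta xy+\delta y^2$; imposing harmonicity, $\Delta_\perp f=2(\alpha+\delta)=0$, forces $\delta=-\alpha$, so $f=\alpha(x^2-y^2)+\beta xy$. Setting $a=2\alpha$ and $b=\beta$ gives the stated parametrization $f=a\,\tfrac{x^2-y^2}{2}+b\,xy$, and exhibits $\{\tfrac12(x^2-y^2),\,xy\}$ as a basis of the (two‑dimensional) space of harmonic quadratics, each basis element being harmonic by Lemma~\ref{lem1}.

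For the transformation law I would use the complex variable $z=x+iy$, since $z^2=(x^2-y^2)+2ixy$ makes the spin‑2 structure transparent: one checks immediately that $f=\tfrac12\,\mathrm{Re}\big[(a-ib)\,z^2\big]$. Under the coordinate rotation $z=e^{i\varphi}z'$ one has $z^2=e^{2i\varphi}(z')^2$, so the complex coefficient transforms as $a-ib\mapsto e^{2i\varphi}(a-ib)$; separating real and imaginary parts reproduces exactly $(a',b')=(a\cos2\varphi+b\sin2\varphi,\,-a\sin2\varphi+b\cos2\varphi)$, i.e.\ a rigid rotation of the vector $(a,b)$ by angle $-2\varphi$. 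As a cross‑check, and to pin down the active/passive sign convention, I would redo this by substituting $x=x'\cos\varphi-y'\sin\varphi$, $y=x'\sin\varphi+y'\cos\varphi$ directly into the two basis monomials and collecting coefficients of $(x'^2-y'^2)/2$ and $x'y'$; the double‑angle identities yield the same matrix.

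Finally, the normal‑form statement follows from the orbit structure of this action. Since $(a,b)$ undergoes a rotation by $-2\varphi$, the orbits are circles of radius $\rho=\sqrt{a^2+b^2}$, and for any nonzero $(a,b)$ one may choose $\varphi$ so that $(a',b')=(\rho,0)$, giving $f=\tfrac{\rho}{2}(x'^2-y'^2)$, or $(a',b')=(0,\rho)$, giving $f=\rho\,x'y'$ (the two representatives differing by a further $\pi/4$ rotation, hence lying in the same orbit up to scale). Thus every nonzero centered harmonic quadratic is $SO(2)$‑equivalent, modulo the overall amplitude $\rho$, to either $\tfrac12(x^2-y^2)$ or $xy$.

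I expect no genuine obstacle in this lemma, which is essentially a one‑line observation about the weight‑$2$ representation of the rotation group. The only points requiring care are bookkeeping the factor of two in the angle (the spin‑2 weight responsible for $2\varphi$ rather than $\varphi$), fixing the active‑versus‑passive rotation convention so that the signs match the stated matrix, and being explicit that ``equivalent'' is meant modulo the positive scale $\rho$, with the degenerate case $(a,b)=0$ (i.e.\ $f\equiv0$) excluded.
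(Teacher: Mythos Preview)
Your proposal is correct and is essentially the same approach as the paper, which simply asserts ``This is the standard spin-2 action on the pair $(a,b)$'' without further detail. You have supplied exactly the explicit computation behind that one-liner, and your bookkeeping of the double-angle, the active/passive convention, and the scale $\rho$ is accurate.
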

\begin{proof} This is the standard spin‑2 action on the pair \((a,b)\). \end{proof}

\begin{remarkx}
\label{rem:two-cases-suffice}
By Lemma~\ref{lem:degree-two-normal-forms}, the two harmonic quadrupoles
\(f_1=\tfrac12(x^2-y^2)\) and \(f_2=xy\)
form canonical representatives of all centered degree‑two inhomogeneities up to rotation.
Hence, changing \(f\) changes the explicit completion fields \(w,\psi\) (and hence \(A_u\) and the
pure‑gradient transverse correction), but it does \emph{not} change the sourced electrovac curvature to \(\mathcal O(\gamma)\).
Relative angles between a quadrupolar \(f\) and the polarization parameters enter only through the Maxwell potential data,
or through optional, independently superposed vacuum pp--wave modes.
\end{remarkx}

\begin{remarkx}
\label{rem:iso-aniso}
Within the intermediate \(\mathcal O(\gamma)\) expansion of Sec.~\ref{Tuusource}, one can organize the would‑be profile
dependence schematically as:
\(
T_{uu}\ \propto\ \operatorname{tr}\mathbf D\,[1+2\gamma f]\;-\;2\gamma\,S{:}\mathbf D\;+\;2\gamma\,\Xi,
\)
where \(S{:}\mathbf D\) encodes anisotropic projections and \(\Xi\) represents the \(u\)-dependent completion contribution.
In the explicit Coulomb representative with harmonic complement removed,
\(\Xi\) cancels the \(f\) and \(S{:}\mathbf D\) contributions at \(\mathcal O(\gamma)\),
leaving the universal, spatially homogeneous source Eq. \eqref{TuutrD}.
Thus any apparent \(\mathcal O(\gamma)\) \(f\)-dependence in the Brinkmann profile can only arise from
vacuum (harmonic) pp--wave data, which is independent of the electromagnetic source and may be excluded by convention.
\end{remarkx}

\begin{remarkx}
\label{rem:scaling-locality}
If \(f\) is homogeneous of degree \(n\), then the canonical completion potential \(w\) solving \(\Delta_\perp w=f\)
is homogeneous of degree \(n+2\), and the scalar \(\psi\) solving \(\Delta_\perp\psi=\mathbf a\!\cdot\!\nabla f\)
inherits the same degree shift. In particular, for degree‑two (quadrupolar) \(f\), the completion fields contribute
only quartic (and higher) transverse structure in \(w\) and \(\psi\), consistent with the near‑axis universality of the
curvature in Sec.~\ref{Curvecont}. Any genuine modification of the \emph{sourced} curvature by \(f\) would therefore occur
only beyond the present order [e.g.\ at \(\mathcal O(\gamma^2)\)], or by adding an independent vacuum pp--wave sector.
\end{remarkx}

As a quick demonstration of the above statements, let us consider the function $f(x,y)=xy$, which is a centered harmonic quadrupole (\(\Delta_\perp f=0\)), with
\(\partial_x f=y\) and \(\partial_y f=x\).
In the canonical (gradient) completion one may choose an explicit polynomial solution of
\(\Delta_\perp w=f\), for instance:
\begin{equation}\label{eq:w_xy}
w(x,y)=\frac{1}{12}\,r^{2}\,x y=\frac{1}{12}\big(x^{3}y+x y^{3}\big),
\qquad r^{2}=x^{2}+y^{2},
\end{equation}
which is regular at the origin and satisfies \(\Delta_\perp w=xy\).
Then \(g=\nabla w\) and \(A_u=\gamma\,[a'_x(u)\,\partial_x w+a'_y(u)\,\partial_y w]\).

For the Maxwell-consistent Coulomb representative one takes:
\(
A_i(u,x,y)=a_i(u)+\gamma\,\partial_i\psi(u,x,y),
\)
with \(\psi\) solving the scalar Poisson equation
\(\Delta_\perp\psi=\mathbf a(u)\!\cdot\!\nabla f=a_x(u)\,y+a_y(u)\,x\).
A convenient explicit polynomial choice is:
\begin{equation}\label{eq:psi_xy}
\psi(u,x,y)=\frac{r^{2}}{8}\,\big[a_y(u)\,x+a_x(u)\,y\big],
\qquad \Delta_\perp\psi=a_y(u)\,x+a_x(u)\,y.
\end{equation}
With this completion one has \(F_{xy}=0\) identically, and the
\(\mathcal O(\gamma)\) profile-dependent contributions to \(T^{(\mathrm{EM})}_{uu}\) cancel
(up to the removable harmonic complement fixed to zero by the chosen boundary/gauge convention).
Therefore, to first order in \(\gamma\), the Brinkmann profile is the same as for the
homogeneous electrovac plane wave.

Accordingly, the cycle-averaged and oscillatory pieces take the universal form:
\begin{align}
h_{\rm apEM}^{\langle\cdot\rangle}(x,y)
&=-\mathcal{C}\,r^{2}+\mathcal O(\gamma^{2}),
\label{eq:hapEM_avg_xy}\\
h_{\rm apEM}^{\rm osc}(u,x,y)
&=\mathcal{C}\;\cos 2\chi\;\cos\!\left[2\theta(u)\right]\;r^{2}
+\mathcal O(\gamma^{2}),
\label{eq:hapEM_osc}
\end{align}
in agreement with the cancellation statement of Remark~\ref{rem:Xi-coulomb-cancel}.

For a circularly polarized electromagnetic wave (\(\cos 2\chi=0\)) the homogeneous oscillatory term vanishes,
and the same Maxwell-consistent completion gives, to \(\mathcal O(\gamma)\),
\(
h_{\rm cpEM}^{\langle\cdot\rangle}(x,y)=-\mathcal{C}\,r^{2}
+\mathcal O(\gamma^{2}),\
h_{\rm cpEM}^{\rm osc}(u,x,y)=\mathcal O(\gamma^{2}).
\)

\begin{remarkx}\label{rem:counterexample-not-electrovac}
The cancellation/universality result of the electrovac, aligned--null sector hinges on enforcing
the \emph{source--free} Maxwell equations $\nabla_\nu F^{\mu\nu}=0$ together with $A_v=0$, $\partial_v A_\mu=0$ and
$F_{xy}=0$.
If one instead imposes a transverse envelope by hand \emph{without} enforcing the source--free Maxwell equations,
the modulation survives in $T_{uu}$ at $\mathcal O(\gamma)$ but is supported by a nonzero current $J^\mu\neq 0$ (hence the configuration is \emph{not} electrovac).

For example, take a linearly polarized potential with $A_v=A_u=0$ and:
\begin{equation}
A_x(u,x,y)=a(u)\,\big[1+\gamma f(x)\big],\qquad A_y(u,x,y)=0,
\qquad 0<|\gamma|\ll1,
\end{equation}
where $f$ is any nonconstant smooth function of $x$ alone (so that $F_{xy}=0$ remains true).
Then:
\begin{equation}
F_{ux}=\partial_u A_x=a'(u)\,\big[1+\gamma f(x)\big],\qquad F_{uy}=0,
\end{equation}
and hence the electromagnetic energy flux becomes:
\begin{equation}
T^{(\mathrm{EM})}_{uu}(u,x,y)
=\frac{1}{\mu_0}\big(F_{ux}^2+F_{uy}^2\big)
=\frac{1}{\mu_0}\,a'(u)^2\,\big[1+2\gamma f(x)\big]+\mathcal O(\gamma^2),
\end{equation}
which is explicitly $x$--dependent at $\mathcal O(\gamma)$.

However, the $v$--component Maxwell equation fails: since $\sqrt{-g}=1$ in Brinkmann coordinates and
$F^{vi}=-F_{ui}$ for the aligned ansatz, the sourced Maxwell system reads
$\nabla_\nu F^{\mu\nu}=\mu_0 J^\mu$ and gives:
\begin{equation}
\mu_0 J^{v}=\partial_i F^{vi}=-\partial_i F_{ui}=-\partial_x F_{ux}
=-\gamma\,a'(u)\,f'(x)\,,
\end{equation}
so that:
\begin{equation}
J^{v}(u,x)=-\frac{\gamma}{\mu_0}\,a'(u)\,f'(x)\neq 0
\qquad\text{(unless $f$ is constant).}
\end{equation}
All other components $J^u,J^x,J^y$ vanish for this simple example.
Thus the transverse modulation is supported by a \emph{null current} flowing along the pp--wave direction
$k=\partial_v$, and the configuration lies outside the \emph{electrovac} class treated in this paper.

More generally, for the naive ansatz $A_i=a_i(u)\,[1+\gamma f(x,y)]$ with $A_u=0$ one typically also generates
$F_{xy}=\gamma\,(a_y\,\partial_x f-a_x\,\partial_y f)\neq 0$, so the field ceases to be aligned--null as well.
Either way, an $\mathcal O(\gamma)$ transverse envelope that survives in $T_{uu}$ requires leaving the strict
source--free aligned--null electrovac pp--wave sector.
\end{remarkx}

\subsection{Test--particle dynamics in the Brinkmann pp--wave}\label{subsec:test-particle-dynamics}

This subsection is included purely as a \emph{pedagogical} visualization of the universal
near-axis tidal content established in Sec.~\ref{Curvecont}. In the electrovac
setting (aligned null Maxwell field with source-free Maxwell equations enforced and with
the harmonic complement removed), a weak transverse "envelope" $1+\gamma f(x,y)$ does
\emph{not} generate any quadrupolar/shearing contribution to the sourced Brinkmann profile
at $\mathcal O(\gamma)$. Accordingly, to the order considered here the electromagnetic
contribution is isotropic in the transverse plane, and a circular ring remains a circle
under the electromagnetic tidal field (no shear), up to the independent addition of an
optional vacuum pp--wave sector.

We illustrate this by considering a circularly polarized electromagnetic wave (cpEM),
together with a right--circularly polarized co--propagating vacuum gravitational pp--wave.
Introducing the fundamental quantities \(\alpha=e^{2}/(4\pi\varepsilon_{0}\hbar c)\)
(fine‑structure constant), \(m_{\mathrm{Pl}}=\sqrt{\hbar c/G}\) (Planck mass),
\(\overline{\lambda}_{C}=\hbar/(m_{e}c)\) (electron reduced Compton wavelength), and
\(E_{S}=m_{e}c^{2}/(e\,\overline{\lambda}_{C})\) (Schwinger critical field), we write a
convenient dimensionless Brinkmann profile as:
\begin{equation}\label{Brinkmann_metric_circwave}
h(u,x,y)= -\frac{m_{e}^{2}}{\alpha\,m_{\mathrm{Pl}}^{2}}\;\mathcal{A}_{E}\;
\left(x^{2}+y^{2}\right)+\;\Re\left\{\frac{\mathcal{A}_{G}(\Omega)}{2}(x-iy)^2e^{i\Omega u/c}\right\},
\end{equation}
where \(\mathcal{A}_{E}\equiv(E_{0}/E_{S})^{2}\); for simplicity we take
\(\phi_{0}=\phi_{\times}=\phi_{+}=0\). The GW contribution is written in the proper
Brinkmann (quadrupolar, harmonic) form; a term depending only on \(u\) would be pure
gauge and carry no curvature.

For a monochromatic TT--gauge strain \(h_{+,\times}(u)\propto \cos(\Omega u/c)\), the
corresponding Brinkmann tidal coefficients satisfy
\(H_{+,\times}(u)=-(\Omega^{2}/c^{2})\,h_{+,\times}(u)\).
Therefore, for the complex right--circular polarization amplitude in
Eq.~\eqref{Brinkmann_metric_circwave} one may write:
\begin{equation}\label{AGstrain}
\mathcal{A}_{G}(\Omega)=\left(\frac{\Omega\overline{\lambda}_{C}}{c}\right)^{2}\sqrt{2}\,h_{0},
\end{equation}
where \(h_{0}\) is the RMS strain per polarization.

The test particles follow Brinkmann geodesics in the transverse plane,
\begin{equation}
\frac{{\rm d}^2 X_i}{{\rm d}u^2}
=-\frac{1}{2}\,\partial_i h[u,X(u)],
\qquad i=x,y,
\label{eq:Brinkmann-geodesic-num}
\end{equation}
which can be integrated numerically starting from an uniformly sampled ring of
radius $R_0$ with zero initial velocity.

For visualisation one can adopt a simple first--order update in the affine
parameter $u$. Writing $\dot{\boldsymbol X}=\boldsymbol V$ and
$\boldsymbol a(u,\boldsymbol X)=-\tfrac12\nabla_\perp h$, the update
from $u_n$ to $u_{n+1}=u_n+\Delta u$ is
\begin{equation}
\boldsymbol V_{n+1}=\boldsymbol V_n+\Delta u\,\boldsymbol a(u_n,\boldsymbol X_n),
\qquad
\boldsymbol X_{n+1}=\boldsymbol X_n+\Delta u\,\boldsymbol V_{n+1}.
\label{eq:update-scheme}
\end{equation}
This basic scheme is sufficient for short-time qualitative snapshots. It is not
intended for high-accuracy long-time evolution in regimes with strong defocusing.

\subsubsection{Electromagnetic component}

In the electrovac setting, a circularly polarized electromagnetic wave produces,
through $\mathcal O(\gamma)$, a \emph{purely isotropic} Brinkmann contribution:
\begin{equation}
h_{\rm EM}(u,x,y)=
-\,\frac{m_{e}^{2}}{\alpha\,m_{\mathrm{Pl}}^{2}}\;\mathcal{A}_{E}\;r^2
\;+\;\mathcal O(\gamma^{2}),
\qquad r^2=x^2+y^2,
\end{equation}
i.e.\ there is no $\mathcal O(\gamma)$ quadrupolar/shear term sourced by a transverse envelope
$1+\gamma f(x,y)$ once Maxwell consistency is enforced and the harmonic complement is removed.
(Any additional anisotropic quadratic piece would correspond to an \emph{independently added}
vacuum pp--wave mode, not to the electromagnetic source.)

The corresponding transverse equations of motion are, to leading order,
\begin{equation}
\ddot x\simeq \kappa_E\,x,\qquad
\ddot y\simeq \kappa_E\,y,
\qquad
\kappa_E:=\frac{m_{e}^{2}}{\alpha\,m_{\mathrm{Pl}}^{2}}\;\mathcal{A}_{E},
\end{equation}
so the electromagnetic background drives an isotropic defocusing: an initially circular ring
remains circular (no shear) and expands/defocuses radially.
For non-circular electromagnetic polarization one may additionally have an \emph{isotropic}
oscillatory “breathing” factor at frequency $2\omega$ (Sec.~\ref{TDC}), but the transverse
tidal matrix remains proportional to $\delta_{ij}$ to this order.

Restoring physical units, \(\kappa_E\sim m_e^2/(\alpha m_{\mathrm{Pl}}^2)\lesssim 10^{-43}\times(E_0/E_S)^2\), so any such effect is extremely small unless one considers unrealistically
large fields and/or very long affine times. In numerical illustrations one therefore treats
\(\kappa_E\) as an adjustable dimensionless strength parameter to make the tidal pattern visible.

\subsubsection{Gravitational--wave component}

For the GW part of Eq.~\eqref{Brinkmann_metric_circwave} the transverse tidal field is traceless and
purely quadrupolar. Differentiating the explicit right--circular form gives:
\begin{align}
\partial_x h_{\rm GW}
&=\mathcal A_G(\Omega)\big[x\cos(\Omega u/c)+y\sin(\Omega u/c)\big],\\
\partial_y h_{\rm GW}
&=\mathcal A_G(\Omega)\big[-y\cos(\Omega u/c)+x\sin(\Omega u/c)\big],
\end{align}
which enter directly into the geodesic update rule Eq. \eqref{eq:update-scheme}. Starting from a
circular ring of radius $R_0$ and zero initial velocity, the snapshots at different $u$
display the familiar sequence of rotating ellipses associated with circularly polarized
plane gravitational waves: the ring is sheared along orthogonal directions while the principal
axes rotate with the GW phase.

\subsubsection{Combined electromagnetic and gravitational driving}

When both components are present, the particles evolve under the full profile
Eq.~\eqref{Brinkmann_metric_circwave}. To leading order one may separate the acceleration into an
approximately isotropic part and a traceless quadrupolar shear,
\begin{equation}
\ddot{\boldsymbol X}(u)
\simeq \kappa_E\,\boldsymbol X
\;+\;\frac{1}{2}\,\mathcal A_G(\Omega)\,\nabla_\perp\!\left[\Re\!\left\{\frac{(x-iy)^2}{2}e^{i\Omega u/c}\right\}\right]_{\boldsymbol X}.
\end{equation}
Here the electromagnetic contribution produces a secular isotropic defocusing (inflation of the ring),
whereas the GW contribution produces the rotating quadrupolar shear. Crucially, in the electrovac
framework there is \emph{no} additional $\mathcal O(\gamma)$ electromagnetic shear sourced by a transverse
envelope $f(x,y)$: any anisotropy seen in the ring is controlled by the vacuum GW sector (or by other
independently specified vacuum pp--wave modes), while the electromagnetic backreaction remains isotropic
through $\mathcal O(\gamma)$.
%========================================================

%---------------------------------------
\section{Summary and concluding remarks}\label{sec:summary}
%---------------------------------------
We constructed and analyzed a class of aligned Einstein--Maxwell solutions in Brinkmann (pp--wave) form
in which the transverse vector potential is \,\emph{a priori}\, allowed to carry a weak, slowly varying
transverse modulation.
The key point of the paper is that, within the electrovac pp--wave/aligned--null class,\footnote{In the
sense that the only matter field is a source--free Maxwell field and the principal null direction of the
Maxwell field coincides with the covariantly constant Brinkmann vector $k=\partial_v$.}
this apparent freedom is strongly constrained by Maxwell consistency.

\medskip
\noindent\textbf{Maxwell consistency and gauge completion.}
Starting from a transversely modulated ansatz for $A_i$, we showed that taking $A_u=0$ generically violates
$\nabla_\nu F^{\mu\nu}=0$ already at $\mathcal O(\gamma)$.
A minimal, polarization--agnostic completion is obtained by solving the transverse Poisson problems
\(\Delta_\perp w=f\) and \(\Delta_\perp\psi=\mathbf a(u)\!\cdot\!\nabla f+\Delta_\perp\psi_{\rm H}\), and setting
\(A_u=\gamma\,\mathbf a'(u)\!\cdot\!\nabla w\) and \(A_i=a_i(u)+\gamma\,\partial_i\psi\)
(Theorem~\ref{thm:gauge-completion}).
The only residual freedom is the harmonic datum $\psi_{\rm H}$, which corresponds to adding a vacuum
(pp--wave) mode and is fixed by the chosen boundary/gauge convention.

\medskip
\noindent\textbf{Universality of the electrovac source.}
For the Maxwell--consistent Coulomb representative, the electromagnetic flux component takes the universal form:
\begin{equation}
T^{(\mathrm{EM})}_{uu}(u,x,y)=\frac{E_0^2}{c^{2}\mu_0}\,\mathrm{tr}\,D(u)
+\frac{E_0^2}{c^{2}\mu_0}\,\gamma\,\mathbf a'(u)^{\!\top}\nabla\psi_{\rm H}(x,y)
+\mathcal O(\gamma^{2}),
\end{equation}
so that, under decay/zero--mode boundary conditions eliminating $\psi_{\rm H}$, the source is transversely
homogeneous through $\mathcal O(\gamma)$ [Eq.~\eqref{TuutrD}].
In other words: \emph{insisting on electrovac Maxwell equations removes the apparent $\mathcal O(\gamma)$
$(x,y)$--dependence of $T_{uu}$, up to optional harmonic/vacuum data}.

\medskip
\noindent\textbf{Metric profile and curvature universality.}
Since $R_{uu}=-\tfrac12\Delta_\perp h$, the Brinkmann profile sourced by the electromagnetic wave coincides,
through $\mathcal O(\gamma)$, with the standard homogeneous electrovac plane--wave profile,
\(h_{\rm EM}^{\langle\cdot\rangle}=-\mathcal C\,r^2\) plus the universal oscillatory ``breathing'' term
\(h_{\rm EM}^{\rm osc}=\mathcal C\cos2\chi\cos(2\theta)\,r^2\) for noncircular polarization.
Consequently the on--axis tidal matrix is isotropic and independent of the modulation profile $f$:
\(R_{uiuj}(u,0)\propto\delta_{ij}\) with an oscillatory amplitude proportional to $\cos2\chi$.
Any residual $\mathcal O(\gamma)$ transverse structure is harmonic and therefore belongs to the vacuum
pp--wave sector rather than to the electromagnetic source.

\medskip
\noindent\textbf{What happens if Maxwell is not source--free.}
To emphasize the logical necessity of the cancellation, we provided a short ``counterexample''
(Remark~\ref{rem:counterexample-not-electrovac}): keeping the naive modulated ansatz with $A_u=0$ produces an
$\mathcal O(\gamma)$ transverse modulation of $T_{uu}$, but at the price of introducing a nonzero current
$J^\mu\neq0$.
This clarifies the physical meaning of transverse envelopes in the Brinkmann class:
if $T_{uu}$ varies across the wavefront at $\mathcal O(\gamma)$, then the field is not electrovac.

\medskip
\noindent\textbf{Outlook.}
The present analysis isolates a clean universality statement at first order in a weak transverse modulation.
Genuine profile--dependent backreaction can enter at higher order (\(\mathcal O(\gamma^2)\)) or in more general
setups (finite beams, non--null Maxwell fields, gyratons, or additional matter such as a plasma).
These directions provide natural extensions if one wishes to model physically localized electromagnetic beams
beyond the electrovac pp--wave idealization.

\appendix
\section{Geometric classification of the Brinkmann electrovac solution}
\label{app:geom-class}
%========================================================

\subsection{Principal null structure and Kundt character}

In Brinkmann form Eq.~\eqref{Brinkmann_metric}, the vector field
\(
k^\mu\partial_\mu=\partial_v
\)
is null, geodesic, and covariantly constant:
\(
k^\mu k_\mu=0,\;
k^\nu\nabla_\nu k^\mu=0,\;
\nabla_\nu k^\mu=0.
\)
Hence the spacetime belongs to the Kundt class and, more specifically, to the pp--wave subclass.

It is convenient to introduce the transverse Hessian of the Brinkmann profile and its tracefree part:
\begin{equation}
\label{eq:app-Hess-Delta-Tracefree}
\mathcal H_{ij}(u,x,y):=\partial_i\partial_j h,
\
\Delta_\perp h:=\delta^{ij}\mathcal H_{ij}=\partial_x^2 h+\partial_y^2 h,
\
\mathcal S_{ij}:=\mathcal H_{ij}-\tfrac12\delta_{ij}\,\Delta_\perp h,
\ i,j\in\{x,y\}.
\end{equation}
The only nonzero Riemann components are:
\begin{equation}
\label{eq:app-Riemann-Ricci}
R_{uiuj}=-\tfrac12\,\mathcal H_{ij},
\qquad\text{and}\qquad
R_{uu}=-\tfrac12\,\Delta_\perp h,
\end{equation}
so the Weyl tensor reduces to:
\begin{equation}
\label{eq:app-Weyl}
C_{uiuj}=-\tfrac12\,\mathcal S_{ij},
\qquad
\text{all other}\ C_{abcd}=0.
\end{equation}

With the standard Newman--Penrose tetrad aligned with \(k\):\cite{NewmanPenrose1962}
\begin{equation}
k^a=\partial_v,\quad
\ell^a=\partial_u-\tfrac12 h\,\partial_v,\quad
m^a=\tfrac{1}{\sqrt2}\big(\partial_x+i\,\partial_y\big),
\end{equation}
the only nonvanishing Weyl scalar is:
\begin{equation}
\label{eq:app-Psi4}
\Psi_4 \;=\; -\,C_{abcd}\,\ell^{a}\bar m^{b}\,\ell^{c}\bar m^{d}
\;=\; -\tfrac12\!\left(h_{xx}-h_{yy}-2i\,h_{xy}\right),
\qquad
\Psi_0=\Psi_1=\Psi_2=\Psi_3=0.
\end{equation}
For an aligned Maxwell field in the Brinkmann background one likewise has a single NP Maxwell scalar:
\begin{equation}
\label{eq:app-phi2}
\phi_2 \;=\; F_{ab}\,\ell^{a}\bar m^{b}
=\tfrac{1}{\sqrt2}\left(F_{ux}+i\,F_{uy}\right),
\qquad \phi_0=\phi_1=0,
\end{equation}
which is the canonical “pure radiation” alignment.

\subsection{Petrov type and specialization to the present solutions}

From Eq.~\eqref{eq:app-Psi4} it follows immediately:

\begin{itemize}
\item \textbf{Type N (generic pp--wave).} If the tracefree Hessian \(\mathcal S_{ij}\) is not identically zero,
then \(\Psi_4\neq0\) while \(\Psi_0=\Psi_1=\Psi_2=\Psi_3=0\). The repeated principal null direction is
\(k=\partial_v\), and the spacetime is of Petrov type~N.

\item \textbf{Type O (conformally flat).} If \(\mathcal S_{ij}\equiv0\) (equivalently \(h_{xx}=h_{yy}\) and
\(h_{xy}=0\)), then \(\Psi_4=0\) and the Weyl tensor vanishes. The curvature is then purely Ricci (null radiation),
with no free gravitational field.
\end{itemize}

With the harmonic complement removed as in Remark~\ref{rem:Xi-coulomb-cancel}, the electromagnetic contribution to the
Brinkmann profile may be taken as the universal particular solution Eq.~\eqref{hEMsolution},
up to addition of independent vacuum pp--wave modes \(\Delta_\perp h_{\rm vac}=0\) which are set to zero by convention
when isolating the electrovac response. For this universal electrovac profile one has
\(
\mathcal H_{ij}=\partial_i\partial_j h_{\rm EM}
=-2\,\mathcal C\,\mathrm{tr}\,\mathbf D(u)\,\delta_{ij},
\
\Delta_\perp h_{\rm EM}=-4\,\mathcal C\,\mathrm{tr}\,\mathbf D(u),
\)
hence \(\mathcal S_{ij}\equiv0\) and the spacetime is \textbf{Petrov type O} (conformally flat), with nonzero Ricci
component \(R_{uu}\) describing aligned null radiation.

For the vacuum Brinkmann gravitational wave profile Eq.~\eqref{hG_quadratic}, one has \(\Delta_\perp h_{\mathrm{GW}}=0\) but
\(
\mathcal H_{ij}=\left(\begin{smallmatrix}H_{+}(u)&H_{\times}(u)\\ H_{\times}(u)&-H_{+}(u)\end{smallmatrix}\right)
\)
and therefore \(\mathcal S_{ij}=\mathcal H_{ij}\not\equiv0\) whenever \(H_{+}\) or \(H_{\times}\) is nonzero.
Thus \(h_{\mathrm{GW}}\) is \textbf{Petrov type N}, as expected for a plane gravitational pp--wave.

By Kerr--Schild linearity within the Brinkmann class, the total profile
\(h=h_{\rm EM}+h_{\rm GW}\) [cf.\ Eq.~\eqref{eq:h_total_general}] has Weyl tensor determined entirely by the tracefree
part of the transverse Hessian. In particular, with the universal electrovac choice above (type O) plus a nontrivial
vacuum GW (type N), the \textbf{combined} spacetime is \textbf{Petrov type N} wherever the GW amplitudes are nonzero.

\subsection{Scalar curvature invariants (VSI property)}

All polynomial scalar invariants formed from the Riemann tensor (and from the Weyl and Ricci tensors separately) vanish:
\begin{equation}
\label{eq:app-VSI}
R=0,\qquad R_{\mu\nu}R^{\mu\nu}=0,\qquad
R_{\mu\nu\rho\sigma}R^{\mu\nu\rho\sigma}=0,\qquad
C_{\mu\nu\rho\sigma}C^{\mu\nu\rho\sigma}=0,
\end{equation}
and likewise for complete contractions involving covariant derivatives of the curvature. This is the familiar VSI
(Vanishing Scalar Invariants) property of pp--waves.\cite{Griffiths2009,Stephani2003,ColeyHervikPravda2004VSI,MilsonMcNuttColey2013} A direct check uses
Eq.~\eqref{eq:app-Riemann-Ricci}: the only nonzero components carry at least one \(u\) index, and since \(g^{uu}=0\)
in Brinkmann form, every full contraction necessarily vanishes. For Einstein--Maxwell, the stress--energy is traceless
(\(T^\mu{}_\mu=0\)), hence \(R=0\), consistent with Eq.~\eqref{eq:app-VSI}.

\bibliographystyle{aipnum4-1}   % numeric AIP style
%\bibliography{JMP,JMP_added}              % JMP.bib in the same folder
%merlin.mbs aipnum4-1.bst 2010-07-25 4.21a (PWD, AO, DPC) hacked
%Control: key (0)
%Control: author (8) initials jnrlst
%Control: editor formatted (1) identically to author
%Control: production of article title (-1) disabled
%Control: page (0) single
%Control: year (1) truncated
%Control: production of eprint (0) enabled
%

\end{document}